\begin{document}

\newtheorem{theorem}{Theorem}
\newtheorem{acknowledgement}[theorem]{Acknowledgement}
\newtheorem{axiom}[theorem]{Axiom}
\newtheorem{case}[theorem]{Case}
\newtheorem{claim}[theorem]{Claim}
\newtheorem{conclusion}[theorem]{Conclusion}
\newtheorem{condition}[theorem]{Condition}
\newtheorem{conjecture}[theorem]{Conjecture}
\newtheorem{criterion}[theorem]{Criterion}
\newtheorem{definition}{Definition}
\newtheorem{exercise}[theorem]{Exercise}
\newtheorem{lemma}{Lemma}
\newtheorem{corollary}{Corollary}
\newtheorem{notation}[theorem]{Notation}
\newtheorem{problem}[theorem]{Problem}
\newtheorem{proposition}{Proposition}
\newtheorem{solution}[theorem]{Solution}
\newtheorem{summary}[theorem]{Summary}
\newtheorem{assumption}{Assumption}
\newtheorem{example}{\bf Example}
\newtheorem{remark}{\bf Remark}

\newtheorem{thm}{Corollary}[section]
\renewcommand{\thethm}{\arabic{section}.\arabic{thm}}

\def\qed{$\Box$}
\def\QED{\mbox{\phantom{m}}\nolinebreak\hfill$\,\Box$}
\def\proof{\noindent{\emph{Proof:} }}
\def\poof{\noindent{\emph{Sketch of Proof:} }}
\def
\endproof{\hspace*{\fill}~\qed
\par
\endtrivlist\unskip}
\def\endproof{\hspace*{\fill}~\qed\par\endtrivlist\vskip3pt}

\def\E{\mathsf{E}}
\def\eps{\varepsilon}
\def\phi{\varphi}
\def\Lsp{{\boldsymbol L}}
\def\Bsp{{\boldsymbol B}}
\def\lsp{{\boldsymbol\ell}}
\def\Ltsp{{\Lsp^2}}
\def\Lpsp{{\Lsp^p}}
\def\Linsp{{\Lsp^{\infty}}}
\def\LtR{{\Lsp^2(\Rst)}}
\def\ltZ{{\lsp^2(\Zst)}}
\def\ltsp{{\lsp^2}}
\def\ltZt{{\lsp^2(\Zst^{2})}}
\def\ninN{{n{\in}\Nst}}
\def\oh{{\frac{1}{2}}}
\def\grass{{\cal G}}
\def\ord{{\cal O}}
\def\dist{{d_G}}
\def\conj#1{{\overline#1}}
\def\ntoinf{{n \rightarrow \infty}}
\def\toinf{{\rightarrow \infty}}
\def\tozero{{\rightarrow 0}}
\def\trace{{\operatorname{trace}}}
\def\ord{{\cal O}}
\def\UU{{\cal U}}
\def\rank{{\operatorname{rank}}}
\def\acos{{\operatorname{acos}}}

\def\SINR{\mathsf{SINR}}
\def\SNR{\mathsf{SNR}}
\def\SIR{\mathsf{SIR}}
\def\tSIR{\widetilde{\mathsf{SIR}}}
\def\Ei{\mathsf{Ei}}
\def\l{\left}
\def\r{\right}
\def\lb{\left\{}
\def\rb{\right\}}

\setcounter{page}{1}

\newcommand{\eref}[1]{(\ref{#1})}
\newcommand{\fig}[1]{Fig.\ \ref{#1}}

\def\bydef{:=}
\def\ba{{\mathbf{a}}}
\def\bb{{\mathbf{b}}}
\def\bc{{\mathbf{c}}}
\def\bd{{\mathbf{d}}}
\def\bee{{\mathbf{e}}}
\def\bff{{\mathbf{f}}}
\def\bg{{\mathbf{g}}}
\def\bh{{\mathbf{h}}}
\def\bi{{\mathbf{i}}}
\def\bj{{\mathbf{j}}}
\def\bk{{\mathbf{k}}}
\def\bl{{\mathbf{l}}}
\def\bm{{\mathbf{m}}}
\def\bn{{\mathbf{n}}}
\def\bo{{\mathbf{o}}}
\def\bp{{\mathbf{p}}}
\def\bq{{\mathbf{q}}}
\def\br{{\mathbf{r}}}
\def\bs{{\mathbf{s}}}
\def\bt{{\mathbf{t}}}
\def\bu{{\mathbf{u}}}
\def\bv{{\mathbf{v}}}
\def\bw{{\mathbf{w}}}
\def\bx{{\mathbf{x}}}
\def\by{{\mathbf{y}}}
\def\bz{{\mathbf{z}}}
\def\b0{{\mathbf{0}}}

\def\bA{{\mathbf{A}}}
\def\bB{{\mathbf{B}}}
\def\bC{{\mathbf{C}}}
\def\bD{{\mathbf{D}}}
\def\bE{{\mathbf{E}}}
\def\bF{{\mathbf{F}}}
\def\bG{{\mathbf{G}}}
\def\bH{{\mathbf{H}}}
\def\bI{{\mathbf{I}}}
\def\bJ{{\mathbf{J}}}
\def\bK{{\mathbf{K}}}
\def\bL{{\mathbf{L}}}
\def\bM{{\mathbf{M}}}
\def\bN{{\mathbf{N}}}
\def\bO{{\mathbf{O}}}
\def\bP{{\mathbf{P}}}
\def\bQ{{\mathbf{Q}}}
\def\bR{{\mathbf{R}}}
\def\bS{{\mathbf{S}}}
\def\bT{{\mathbf{T}}}
\def\bU{{\mathbf{U}}}
\def\bV{{\mathbf{V}}}
\def\bW{{\mathbf{W}}}
\def\bX{{\mathbf{X}}}
\def\bY{{\mathbf{Y}}}
\def\bZ{{\mathbf{Z}}}

\def\bxi{{\boldsymbol{\xi}}}
\def\bmu{{\boldsymbol{\mu}}}
\def\bnu{{\boldsymbol{\nu}}}
\def\bSigma{{\boldsymbol{\Sigma}}}
\def\ep{\overline{\varepsilon_p}}

\def\sT{{\mathsf{T}}}
\def\sH{{\mathsf{H}}}

\def\d{\mathrm{d}}

\def\tcK{{\widetilde{\mathcal{K}}}}

\def\tT{{\widetilde{T}}}
\def\tF{{\widetilde{F}}}
\def\tP{{\widetilde{P}}}
\def\tG{{\widetilde{G}}}
\def\tbh{{\widetilde{\mathbf{h}}}}
\def\tbg{{\widetilde{\mathbf{g}}}}
\def\tbx{{\tilde{\mathbf{x}}}}
\def\tx{{\tilde{x}}}
\def\hbx{{\hat{\mathbf{x}}}}

\def\mA{{\mathbb{A}}}
\def\mB{{\mathbb{B}}}
\def\mC{{\mathbb{C}}}
\def\mD{{\mathbb{D}}}
\def\mE{{\mathbb{E}}}
\def\mF{{\mathbb{F}}}
\def\mG{{\mathbb{G}}}
\def\mH{{\mathbb{H}}}
\def\mI{{\mathbb{I}}}
\def\mJ{{\mathbb{J}}}
\def\mK{{\mathbb{K}}}
\def\mL{{\mathbb{L}}}
\def\mM{{\mathbb{M}}}
\def\mN{{\mathbb{N}}}
\def\mO{{\mathbb{O}}}
\def\mP{{\mathbb{P}}}
\def\mQ{{\mathbb{Q}}}
\def\mR{{\mathbb{R}}}
\def\mS{{\mathbb{S}}}
\def\mT{{\mathbb{T}}}
\def\mU{{\mathbb{U}}}
\def\mV{{\mathbb{V}}}
\def\mW{{\mathbb{W}}}
\def\mX{{\mathbb{X}}}
\def\mY{{\mathbb{Y}}}
\def\mZ{{\mathbb{Z}}}

\def\cA{\mathcal{A}}
\def\cB{\mathcal{B}}
\def\cC{\mathcal{C}}
\def\cD{\mathcal{D}}
\def\cE{\mathcal{E}}
\def\cF{\mathcal{F}}
\def\cG{\mathcal{G}}
\def\cH{\mathcal{H}}
\def\cI{\mathcal{I}}
\def\cJ{\mathcal{J}}
\def\cK{\mathcal{K}}
\def\cL{\mathcal{L}}
\def\cM{\mathcal{M}}
\def\cN{\mathcal{N}}
\def\cO{\mathcal{O}}
\def\cP{\mathcal{P}}
\def\cQ{\mathcal{Q}}
\def\cR{\mathcal{R}}
\def\cS{\mathcal{S}}
\def\cT{\mathcal{T}}
\def\cU{\mathcal{U}}
\def\cV{\mathcal{V}}
\def\cW{\mathcal{W}}
\def\cX{\mathcal{X}}
\def\cY{\mathcal{Y}}
\def\cZ{\mathcal{Z}}
\def\cd{\mathcal{d}}
\def\Mt{M_{t}}
\def\Mr{M_{r}}
\newcommand{\figref}[1]{{Fig.}~\ref{#1}}
\newcommand{\tabref}[1]{{Table}~\ref{#1}}

\newcommand{\cov}{\mathsf{Cov}}
\newcommand{\var}{\mathsf{Var}}
\newcommand{\fb}{\tx{fb}}
\newcommand{\nf}{\tx{nf}}
\newcommand{\BC}{\tx{(bc)}}
\newcommand{\MAC}{\tx{(mac)}}
\newcommand{\Pout}{p_{\mathsf{out}}}
\newcommand{\nnn}{\nn\\}
\newcommand{\FB}{\tx{FB}}
\newcommand{\TX}{\tx{TX}}
\newcommand{\RX}{\tx{RX}}
\renewcommand{\mod}{\tx{mod}}
\newcommand{\m}[1]{\mathbf{#1}}
\newcommand{\td}[1]{\tilde{#1}}
\newcommand{\sbf}[1]{\scriptsize{\textbf{#1}}}
\newcommand{\stxt}[1]{\scriptsize{\textrm{#1}}}
\newcommand{\suml}[2]{\sum\limits_{#1}^{#2}}
\newcommand{\sumlk}{\sum\limits_{k=0}^{K-1}}
\newcommand{\eqhsp}{\hspace{10 pt}}
\newcommand{\Hz}{\ \tx{Hz}}
\newcommand{\sinc}{\tx{sinc}}
\newcommand{\tr}{\mathsf{tr}}
\newcommand{\diag}{\mathrm{diag}}
\newcommand{\MAI}{\tx{MAI}}
\newcommand{\ISI}{\tx{ISI}}
\newcommand{\IBI}{\tx{IBI}}
\newcommand{\CN}{\tx{CN}}
\newcommand{\CP}{\tx{CP}}
\newcommand{\ZP}{\tx{ZP}}
\newcommand{\ZF}{\tx{ZF}}
\newcommand{\SP}{\tx{SP}}
\newcommand{\MMSE}{\tx{MMSE}}
\newcommand{\MINF}{\tx{MINF}}
\newcommand{\RC}{\tx{MP}}
\newcommand{\MBER}{\tx{MBER}}
\newcommand{\MSNR}{\tx{MSNR}}
\newcommand{\MCAP}{\tx{MCAP}}
\newcommand{\vol}{\tx{vol}}
\newcommand{\ah}{\hat{g}}
\newcommand{\tg}{\tilde{g}}
\newcommand{\teta}{\tilde{\eta}}
\newcommand{\heta}{\hat{\eta}}
\newcommand{\uh}{\m{\hat{s}}}
\newcommand{\eh}{\m{\hat{\eta}}}
\newcommand{\hv}{\m{h}}
\newcommand{\hh}{\m{\hat{h}}}
\newcommand{\Po}{P_{\mathrm{out}}}
\newcommand{\Poh}{\hat{P}_{\mathrm{out}}}
\newcommand{\Ph}{\hat{\gamma}}
\newcommand{\mat}[1]{\begin{matrix}#1\end{matrix}}
\newcommand{\ud}{^{\dagger}}
\newcommand{\C}{\mathcal{C}}
\newcommand{\nn}{\nonumber}
\newcommand{\nInf}{U\rightarrow \infty}

\title{Ultra-Low-Latency Edge Intelligent Sensing: A Source-Channel Tradeoff and Its Application to Coding Rate Adaptation}
\author{Qunsong Zeng,~\IEEEmembership{Member,~IEEE}, Jianhao Huang,~\IEEEmembership{Member,~IEEE}, Zhanwei Wang,~\IEEEmembership{Graduate Student Member,~IEEE}, Kaibin Huang,~\IEEEmembership{Fellow,~IEEE}, and Kin K. Leung,~\IEEEmembership{Fellow,~IEEE}
\thanks{Q. Zeng, J. Huang, Z. Wang, and K. Huang are with Department of Electrical \& Electronic Engineering, The University of Hong Kong, Hong Kong, China. K. K. Leung is with both Electrical \& Electronic Engineering and Computing Departments, Imperial College London, London, UK.

\emph{Corresponding author: Kaibin Huang} (Email: huangkb@eee.hku.hk).}\vspace{-10mm}
}

\maketitle

\begin{abstract}
The forthcoming sixth-generation (6G) mobile network is set to merge edge artificial intelligence (AI) and integrated sensing and communication (ISAC) extensively, giving rise to the new paradigm of \emph{edge intelligent sensing} (EI-Sense). 
This paradigm leverages ubiquitous edge devices for environmental sensing and deploys AI algorithms at edge servers to interpret the observations via remote inference on wirelessly uploaded features. 
A significant challenge arises in designing EI-Sense systems for 6G mission-critical applications, which demand high performance under stringent latency constraints. 
To tackle this challenge, we focus on the end-to-end (E2E) performance of EI-Sense and characterize a source-channel tradeoff that balances source distortion and channel reliability.
In this work, we establish a theoretical foundation for the source-channel tradeoff by quantifying the effects of source coding on feature discriminant gains and channel reliability on packet loss. 
Building on this foundation, we design the coding rate control by optimizing the tradeoff to minimize the E2E sensing error probability, leading to a low-complexity algorithm for ultra-low-latency EI-Sense. 
Finally, we validate our theoretical analysis and proposed coding rate control algorithm through extensive experiments on both synthetic and real datasets, demonstrating the sensing performance gain of our approach with respect to traditional reliability-centric methods.

\end{abstract}
\begin{IEEEkeywords}
Sensing, edge AI, quantization, short packet transmission, ultra-low-latency sensing and communication.
\end{IEEEkeywords}    
\vspace{-4mm}
\section{Introduction}
Beyond a scaled-up version of 5G, the upcoming 6G mobile networks introduce two new usage scenarios: \emph{integrated sensing and communication} (ISAC) and \emph{integrated AI and communication} (IAAC)~\cite{ITU-R_M2160_2023}. The former exploits the ubiquitous edge devices as sensors for environment perception~cite{cui2021integrating}. The latter involves widespread deployment of AI algorithms at the network edge for providing intelligence services to edge devices, giving rise to the area of edge AI~\cite{letaief2019roadmap}. Their natural fusion, termed \emph{edge intelligent sensing} (EI-Sense), promises to offer a unified platform for supporting a wide range of applications, e.g., industrial automation, autonomous driving, and robotic control~\cite{liu2025integrated}. Among these, 6G mission-critical applications pose a significant challenge for EI-Sense design, as they demand high performance under stringent latency constraints~\cite{letaief2022edge}. To tackle this challenge, we consider an ultra-low-latency EI-Sense system, where the remote inference at an edge server leverages observations uploaded by sensors by short packets. In this work, we study the end-to-end (E2E) performance of EI-Sense by mathematically characterizing a tradeoff between source distortion and channel reliability, termed the source-channel tradeoff. By designing a coding rate adaptation scheme, this tradeoff is optimized for maximizing the sensing performance under strict latency and radio resource constraints.

The advancements of EI-Sense can leverage those in the areas of ISAC and edge AI. The former has primarily focused on developing dual-functional systems and techniques to efficiently utilize shared radio resources, thereby improving the spectral efficiency and reducing latency~\cite{liu2022integrated,liu2020jointradar,zhang2021enabling}. In practice, these benefits have been exploited to support autonomous driving with enhanced road safety~\cite{kumari2020adaptive} and industrial automation through real-time monitoring and control of manufacturing processes~\cite{zhu2024enabling}. Existing research in ISAC has already explored the integration with AI by typically applying machine learning algorithms as optimization tools to address challenges in sensing or communication, such as adaptive resource allocation~\cite{liu2024optimal}, interference mitigation~\cite{qi2024deep}, and joint optimization of sensing and communication functions ~\cite{liu2022learning}. However, a systematic fusion approach should involve positioning edge AI as the platform of providing remote intelligence to execute tasks that rely on observations from sensors~\cite{chen2024view}. This EI-Sense backbone architecture is closely related to a common approach in edge AI known as split inference~\cite{wang2020convergence}. In split inference, sensors use a lightweight neural network model to extract features from local sensing data, which are then uploaded wirelessly to the edge server for inference using a pre-trained deep neural network model~\cite{wu2024device,li2020edge,wen2024taskoriented}. Though the compression of raw data into more informative features reduces overhead significantly, the communication bottleneck still exists due to the high dimensionality of features~\cite{xu2018scaling}. One solution to overcome the bottleneck is to prune features based on their importance levels for accurate inference~\cite{shao2020communication, guo2020channel}. However, this required joint adjustment of both sensor and server models limits the approach’s flexibility and adaptability. Another popular solution, known as joint source-and-channel coding, employs an auto-encoder pair consisting of an encoder and a decoder deployed separately at the two ends of the channel. These are jointly trained to perform efficient feature extraction, inference, and coping with channel noise simultaneously~\cite{jankowski2020wireless, wu2024deep}. However, this black-box method lacks interpretability and hence is difficult to derive useful insights into the system’s optimal operations, thereby limiting its compatibility with existing digital techniques for coding, modulation, and adaptive transmissions. At a higher level, most existing wireless techniques designed for EI-Sense-relevant systems rely on the assumption of long-packet transmission. Thereby, they fall short on providing solutions for 6G ultra-low-latency applications.
 
The key 5G paradigm of ultra-reliable low-latency communication (URLLC) will be further advanced with the realization of the 6G vision of hyper-reliable low-latency communication (HRLLC)~\cite{hong20226g}. Specifically, air latency will be reduced to as low as 0.1 milliseconds and reliability dramatically increased to attain an extremely low packet error rate of $10^{-7}$~\cite{tao20236g}. To achieve the ultra-low-latency targets of URLLC or HRLLC relies on short-packet transmission, which ensures real-time data delivery~\cite{durisi2016toward, popovski2019wireless}. Underpinning the relevant designs is the finite block-length information theory, which accounts for the fact that the decoding error rate of short packets can no longer be negligible and is a function of packet length~\cite{polyanskiy2010channel,ersegh2015evaluation}. Due to the inherent tradeoffs among data rate, latency, and reliability, existing URLLC/HRLLC techniques have to operate at low data rates in return for low-latency and high reliability, limiting their applications to those requiring only short messages, e.g., commands or emergency alerts~\cite{wu2024goal}. They are inadequate for supporting AI-enabled applications in 6G as they are data intensive. To boost the data rates of URLLC has motivated researchers to streamline relevant techniques such as resource allocation~\cite{Nasir2021resource}, frame structure design~\cite{zhou2023joint}, and joint power-and-blocklength optimization~\cite{ren2019joint}. Despite such efforts, meeting the performance requirements for ultra-low-latency EI-Sense still needs major breakthroughs in supporting real-time transmission of high-dimensional features. A critical issue arises when assembling large-bit feature vectors into short packets, which induces a high coding rate and can lead to severe packet loss. This necessitates the use of aggressive source coding to compress the feature vectors into fewer bits but doing so would incur severe source distortion and thereby degrade the sensing performance. Therefore, it is essential to carefully balance the source-channel tradeoff for developing URLLC/HRLLC techniques to support EI-Sense applications, which motivates this work.

In this work, we present a novel ultra-low-latency EI-Sense framework. As a key component, the source-channel tradeoff is derived to establish a theoretical foundation for designing the EI-Sense system and techniques. Specifically, the tractable analysis of this tradeoff provides a theoretic tool for optimizing the EI-Sense performance, yielding techniques for reliable feature transmission under stringent constraints on latency and radio resources. In the considered system with sequential sensing, a single-antenna sensor captures a sequence of observations and uploads the quantized features to a multi-antenna edge server for remote inference. Multiple sensor observations help to improve the sensing performance via view diversity~\cite{zeng2024knowledge}. Under extremely low latency constraints, acquiring channel state information at the transmitter side (CSIT) becomes extremely costly and impractical~\cite{lopez2020ultralow}. Assuming no CSIT, we consider fixed-power transmission while the increased error rate rising from lack of channel adaptation would be absorbed by exploiting the robustness of inference model. Furthermore, the E2E performance is enhanced by balancing the source-channel tradeoff through code rate adaptation. Our tractable analysis is based on the assumption that the feature vectors follow a Gaussian mixture model, which is widely used in statistical inference~\cite{hastie2009elements} and deep learning~\cite{ye2014realtime,figueroa2019semi}. The resultant results are subsequently validated using real datasets. The main contributions and key findings of this work are summarized as follows.
\begin{itemize}
    \item \textbf{Source-channel tradeoff analysis:} The tradeoff builds on mathematical analysis the effects of source coding and channel reliability on E2E sensing performance. On one hand, the effect of source coding (i.e., block quantization) is reflected in the reduction on discriminant gain of features extracted from sensing data. Using a practical uniform scalar quantizer with transform coding, we show that the difference between the reduced and original (without quantisation) discriminant gains is on the order of $O(4^{-R})$, where $R$ represents the number of bits used to quantize a single feature. On the other hand, channel reliability, measured by the packet loss probability, affects the number of feature vectors successfully received. By invoking the finite-blocklength information theory, a higher coding rate increases the packet loss probability or equivalently reduces the number of received feature vectors. By combining the preceding source-channel analysis, the derived source-channel tradeoff reveals relations among discriminant gain, packet loss probability, and the number of observations, as well as their collective effects on sensing performance. In addition, we quantify the influence of the number of receive antennas at the edge server and discuss the relationship between channel diversity and feature diversity.
    \item \textbf{Coding rate optimization:} The coding rate is optimised as a variable in the preceding source-channel tradeoff with the objective of minimizing the E2E sensing error probability. The difficulty in solving this non-convex problem is overcome by proposing a reasonable surrogate function of coding rate that is shown to be concave and facilitate finding the optimal coding rate using an existing convex optimization toolbox. Furthermore, we derive a closed-form approximation of the surrogate function’s gradient. The result enables the development of a low-complexity optimization algorithm based on the gradient ascent method.
    \item \textbf{Experiments:} The analytical results and the effectiveness of proposed algorithms are validated through experiments on both synthetic and real datasets. Especially, leveraging insights from our analytical results, we extend our approach to the case of convolutional neural network (CNN) as the classifier on a real multi-view dataset. The proposed adaptive coding-rate design demonstrates near-optimal performance and consistently outperforms traditional reliability-centric URLLC techniques in terms of E2E sensing performance. 
\end{itemize}

The remainder of the paper is organized as follows. 
The models and metrics are elaborated in Section \ref{Sec: System model}, followed by preliminaries in Section~\ref{Sec: Preliminary and system configuration}.
Section \ref{Sec: Analysis} analyzes the impacts of source distortion and channel reliability on sensing performance, and discusses the source-channel tradeoff. Coding rate optimization is covered in Section \ref{Sec: Tradeoff and optimization}.
Experimental results are provided in Section \ref{Sec: Experimental Results}. The paper is concluded in Section \ref{Sec: Conclusion}.

\begin{figure*}[t!]
    \centering
    \subfigure[E2E EI-Sense system]{
    \includegraphics[width=1.86\columnwidth]{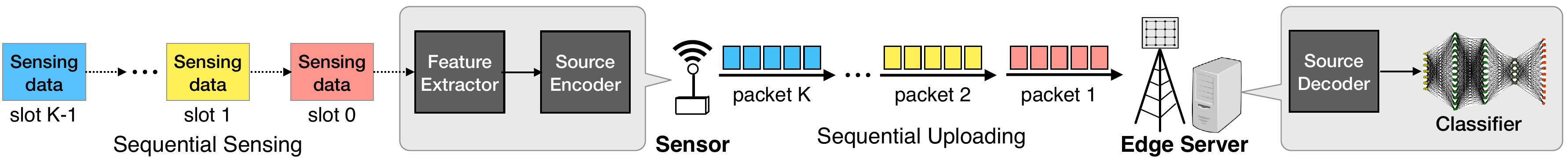}}
    \subfigure[Sequential sensing operation]{
    \includegraphics[width=1.125\columnwidth]{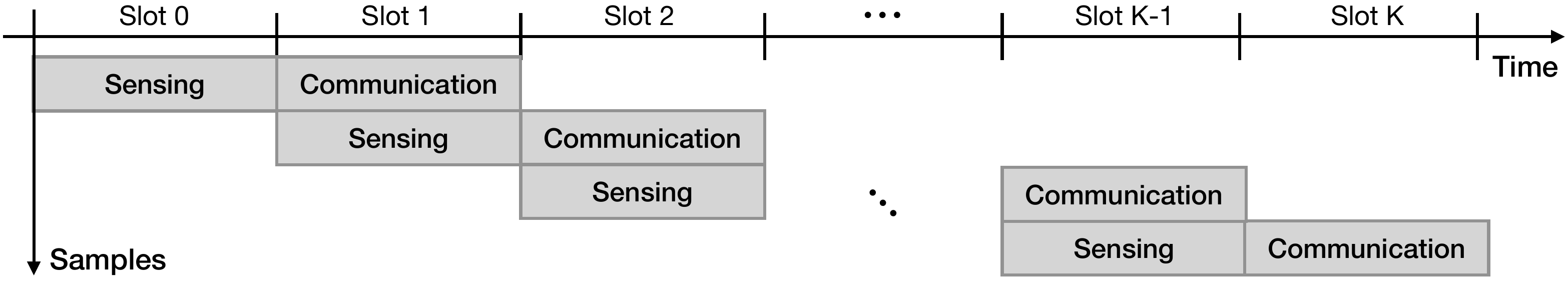}}
    \caption{An illustration of an E2E ultra-low-latency EI-Sense system. (a) System architecture and operations at the sensor and edge server. (b) Parallelization between sensing and communication in the sequential sensing scenario.\vspace{-8mm}}\label{Fig: E2E system}
\end{figure*}

\section{Models and Metrics}\label{Sec: System model}

We consider an E2E ultra-low-latency EI-Sense system as shown in Fig.~\ref{Fig: E2E system}, where $K$ local observations from the sensor are sequentially perceived, compressed into feature vectors, and sent to the edge server for remote inference.
Relevant models and metrics are described as follows.
\vspace{-3mm}
\subsection{Data and Inference Models}
The sensor produces $K$ observations of a target object, labeled as class $\ell$, which is assumed to be drawn uniformly from a set of $L_c$ classes.
Specifically, for the $k$-th observation, the sensor generates a feature vector that consists of $d$ measurements, represented as $\bx_k\in\mR^{d}$. 
These feature vectors are assumed to be randomly sampled from a joint conditional distribution $(\bx_1,\cdots,\bx_K)\sim p_{(\bx_1,\cdots,\bx_K)}(\bx_1,\cdots,\bx_K|\ell)$.
As such, the joint feature vector distribution can be expressed as
\begin{equation}
    p(\bx_1,\cdots,\bx_K)\sim\frac{1}{L_c}\sum_{\ell=1}^{L_c} p_{(\bx_1,\cdots,\bx_K)}(\bx_1,\cdots,\bx_K|\ell).
\end{equation}
Further explications of the distribution are provided for the two types of classifier models, respectively.
\subsubsection{Gaussian Mixture Model for Statistical Inference}
For reasons of tractability and practicality, we focus on the scenario where each feature vector originates from a Gaussian mixture distribution~\cite{figueroa2019semi,lan2023progressive}. 
Specifically, when conditioned on the $\ell$-th class, each feature vector $\bx_k$ observed by the sensor is independently drawn from a multivariate Gaussian distribution. 
This distribution has a mean of $\bmu_{\ell}\in\mR^{d}$ and covariance matrix $\bSigma\in\mR^{d\times d}$, giving that $(\bx_k|\ell)\sim\cN(\bmu_{\ell},\bSigma)$. 
Then, the \emph{probability density function} (PDF) of the joint feature vector distribution for $(\bx_1,\cdots,\bx_K)$ is represented as 
\begin{equation}\label{Eqn: joint GMM}
    p(\bx_1,\cdots,\bx_K)=\frac{1}{L_c}\sum_{\ell=1}^{L_c}\prod_{k=1}^K\cN(\bx_k|\bmu_{\ell},\bSigma),
\end{equation}
where $\cN(\bx_k|\bmu_{\ell},\bSigma)$ denotes the PDF of the multivariate Gaussian distribution with mean $\bmu_{\ell}$ and covariance matrix $\bSigma$. 
We consider the \emph{maximum a posterior} (MAP) classifier for the distribution in \eqref{Eqn: joint GMM}. 
Due to the uniform prior probabilities of classes, the MAP classifier is equivalent to a \emph{maximum likelihood} classifier, which gives the estimated label $\hat{\ell}$ as 
\begin{align}
    \hat{\ell}=\arg\max_{\ell}\Pr(\bx_1,\cdots,\bx_K|\ell)=\arg\max_{\ell}\prod_{k=1}^Kp_{\ell}(\bx_k),
\end{align}
where $p_{\ell}(\bx_k)\triangleq\Pr(\bx_k|\ell)$ is the likelihood function of the $k$-th observation by the sensor.
\subsubsection{General Model for CNN Classification}
We also consider the practical CNN classifier model to validate insights from analysis. 
For CNN classifiers, the architecture comprises multiple convolutional (CONV) layers followed by multiple fully-connected layers and a softmax output activation function that outputs a confidence score of each label. 
To implement the split inference, the classifier is partitioned into the sensor and server sub-models, represented as functions $f_{\sf sensor}(\cdot)$ and $f_{\sf server}(\cdot)$, respectively. 
Let $N_c$ denote the number of CONV filters in the final layer of $f_{\sf sensor}(\cdot)$, and each filter generates a feature map with dimensions $L_h$ and $L_w$. 
The set of all extracted feature maps from an input image is represented by the tensor $\bX\in\mR^{N_c\times L_h\times L_w}$. 
The feature maps are then aggregated and the confidence score of the server-side classifier can be obtained by feeding the feature maps into the server sub-model, i.e., $\{s_1,\cdots,s_{L_c}\}=f_{\sf server}(\bX_1,\cdots,\bX_K)$. 
The classifier outputs the inferred label that has the maximum confidence score, giving that $\hat{\ell}=\arg\max_{\ell}s_{\ell}$.

\vspace{-4mm}
\subsection{Source Coding (Block Quantization)}
To ease the notation, we focus on the source coding of the feature vector $\bx_k$ from the $k$-th observation and omit the subscript $k$ without causing any confusion.
The purpose of source coding is to compress data into bit streams within an acceptable level of distortion. 
This process involves an encoder $\cF(\cdot)$ at the sensor side and a decoder $\cF^{-1}(\cdot)$ at the edge server side, respectively.

The sensor-side encoder follows a standard procedure for quantizing the blocks of $d$ correlated Gaussian random variables~\cite{huang1963block}. This procedure typically includes transform coding and quantization for data compression. 
To achieve superior quantization, we apply the transform coding, such as the Karhunen-Lo\`eve transform which is the optimal transform for Gaussian sources~\cite{akyol2009transform}. 
The orthogonal transform yields that $\tbx=\bA\bx$, where the transform matrix $\bA$ satisfies $\bA^{\sT}\bA=\bI$, making the coefficients in $\tbx$ uncorrelated. 
Thus, we can conduct subsequent element-wise scalar quantization. 
We assume that the transformed feature vector $\tbx$ can only take values in a finite constellation set $\cS\subset\mR^{d}$. 
The cardinality constraint on $\cS$ due to quantization is expressed as $\log_2|\cS|\leq Rd$, where $R$ denotes the number of bits per dimension. 
For simplicity and ease of implementation~\cite{saha2022decentralized}, we focus on the uniform scalar quantizers to illustrate system design. 
Along each dimension, we have a total of $2^R$ quantization points which we denote here as $\{u_1,\cdots,u_{2^R}\}$, where $u_1=-U$ and $u_{2^R}=+U$. 
Then, the quantization resolution is given as
\begin{equation}
    \Delta=\frac{2U}{2^R-1}.
\end{equation}
We denote the block quantizer as $\cQ_{\Delta,U}(\cdot)$, where the element-wise uniform scalar quantization scheme is employed. 
For an input vector $\tbx=[\tx_1,\cdots,\tx_d]^{\sT}\in\mR^d$, the quantizer output is
\begin{equation}
    \tbx'=\cQ_{\Delta,U}(\tbx)=[q(\tx_1),\cdots,q(\tx_d)]^{\sT},
\end{equation}
where the operation $q(\cdot)$ is defined as $q(x)=u_i$ for $x\in[u_{i}-\Delta/2,u_{i}+\Delta/2)$, $i=1,\cdots,M$.
Consequently, the source encoder produces $\tbx'=\cF(\bx)=\cQ_{\Delta,U}(\bA\bx)$.

The server-side source decoder executes the transform decoding to retrieve the original feature vector in the following manner: $\hbx=\cF^{-1}(\tbx')=\bA^{\sT}\tbx'$.
Finally, the decoded feature vector $\hbx$ is fed into the classifier for inference.
\vspace{-3mm}
\subsection{Short-Packet Transmission}\label{Sec: Short packet transmission}
For each sensing observation, say feature vector $\bx_k$, the sensor uploads the encoded feature vector $\tbx'_k$ to the edge server over a wireless channel, with an allocated bandwidth of $B$. 
Each time slot is required to meet the stringent latency constraint of $T$. 
Hence, the number of channel uses that allowed for each sensor in one time slot is fixed as $N=TB$. 
In this context, the coding rate of the sensor is expressed as 
\begin{equation}
    R_{c}=\frac{Rd}{N}.\label{Eqn: code rate}
\end{equation} 

During transmission, we assume \emph{independent and identically distributed} (i.i.d.) block fading channels between the sensor and the edge server.
In this case, the channel coefficients remain constant during each feature vector's transmission while i.i.d. varies among different time slots\footnote{For 5G millimeter-wave communications operating in the 24--100 GHz range, consider a carrier frequency of 60 GHz. Assuming a relative velocity of 10 m/s, the resulting coherence time is approximately 0.1 ms~\cite{rappaport2002wireless}, which corresponds to a single time slot.}.
We consider the scenario that the sensor has no knowledge of its CSIT and it uses the full power for transmission in each time slot. 
This is a practical requirement for ultra-low-latency transmission, especially in the presence of SIMO systems, where the feedback of channel vector consumes radio resource and increases the latency. 
Additionally, we do not employ a retransmission strategy for packet loss to eliminate the feedback overhead required in order to meet the ultra-low-latency constraint. Moreover, continuous sensing inherently produces i.i.d. observations over time, making retransmissions unnecessary.
According to the finite blocklength information theory~\cite{polyanskiy2010channel}, the decoding error probability in transmitting the $k$-th encoded feature vector $\tbx'_k$ can be closely approximated as 
\begin{equation}
    \varepsilon_{p,k}=Q\l(\sqrt{\frac{N}{V(\gamma_k)}}\l(C(\gamma_k)-R_{c}\r)\r),\label{Eqn: packet loss probability}
\end{equation}
where $\gamma_k$ is the received SNR in the $k$-th slot, $Q(\cdot)$ denotes the Q-function defined as $Q(x)=\frac{1}{\sqrt{2\pi}}\int_{x}^{\infty}\exp\l(-\frac{t^2}{2}\r)\d t$, $C(\gamma_k)$ is the Shannon capacity known as
    $C(\gamma_k)=\log_2(1+\gamma_k),$
and $V(\gamma_k)$ represents the channel dispersion \cite{polyanskiy2009dispersion} expressed as 
    $V(\gamma_k)=\frac{\gamma_k\l(2+\gamma_k\r)}{\l(1+\gamma_k\r)^2}(\log_2e)^2.$
\vspace{-3mm}
\subsection{Communication Model}
The single-antenna sensor sequentially transmits $K$ feature vectors to an edge server equipped with $L$ antennas.
In the $k$-th time slot, the signal received by the server from the sensor through the $l$-th stream can be written as
\begin{equation}
    \by_{k}^{(l)}=h_{k}^{(l)}\bs_{k}+\bz_{k}^{(l)},~~l\in\{1,\cdots,L\},
\end{equation}
where the transmitted packet $\bs_{k}\in\mC^{N\times1}$ satisfies $\mE[\bs_{k}(\bs_{k})^{\sH}]=P_0\bI$ with $P_0$ being the signal transmit power of the sensor, $h_{k}^{(l)}\sim\cC\cN(0,1)$ represents i.i.d. complex Gaussian coefficients in the Rayleigh fading channel, and $\bz_{k}^{(l)}\sim\cC\cN(\b0,N_0\bI)$ is \emph{additive white Gaussian noise} (AWGN). 
The transmit SNR is defined as $\gamma_0={P_0}/{N_0}$.
To decode the signal, we consider the \emph{maximal ratio combining} (MRC), i.e., 
\begin{equation}
    \by_{k}=\sum_{l=1}^{L}(h_{k}^{(l)})^*\by_{k}^{(l)}=\|\bh_{k}\|^2\bs_{k}+\sum_{l=1}^{L}(h_{k}^{(l)})^*\bz_{k}^{(l)},
\end{equation}
and thus the post-processing SNR in the $k$-th time slot is given by $\gamma_{k}=\gamma_0\|\bh_{k}\|^2$. 

\vspace{-3mm}
\subsection{Performance Metric}
To characterize the inference performance, useful metrics are described as follows.
\subsubsection{Discriminant Gain}
In statistical inference using Gaussian mixture model, the discriminant gain is introduced to measure the discernibility between any pair of classes~\cite{lan2023progressive}. 
Specifically, the pair-wise discriminant gain for class-$\ell$ and class-$\ell'$ in the feature subspace spanned by observations of the sensor is defined as
\begin{equation}\label{Eqn: discriminant gain}
    D(\ell,\ell')=\frac{1}{2}(\bmu_{\ell}-\bmu_{\ell'})^{\sT}\bSigma^{-1}(\bmu_{\ell}-\bmu_{\ell'}),
\end{equation}
which represents half of the squared \emph{Mahalanobis distance} between the centroids of the two clusters in this feature space.
\subsubsection{Sensing Error Probability}
The performance of the EI-Sense system is evaluated by the sensing error probability, which is the likelihood of a sample being incorrectly classified at the edge server side. 
For statistical inference, it refers to the \emph{Bayes error}. Given a uniform prior, the conditional error given the feature vector $\bx$ is $e(\bx)=1-\max_{\ell}\Pr(\ell|\bx)$. 
The Bayes error for sensing results is the expected value of $e(\bx)$ over $\bx$, that is, $P_e=\mE[e(\bx)]$. 
For CNN case, it refers to the \emph{Top-1 error rate} which measures how frequently the classifier does not assign the top score to the correct class.
It signifies the fraction of received samples where the class, predicted with the highest score, aligns with the true label. 
Specifically, the metric calculates the proportion of samples for which the CNN classifier's prediction is incorrect.

\begin{figure}[t!]
    \centering
    \subfigure[Gaussian mixture model]{\includegraphics[width=0.48\columnwidth]{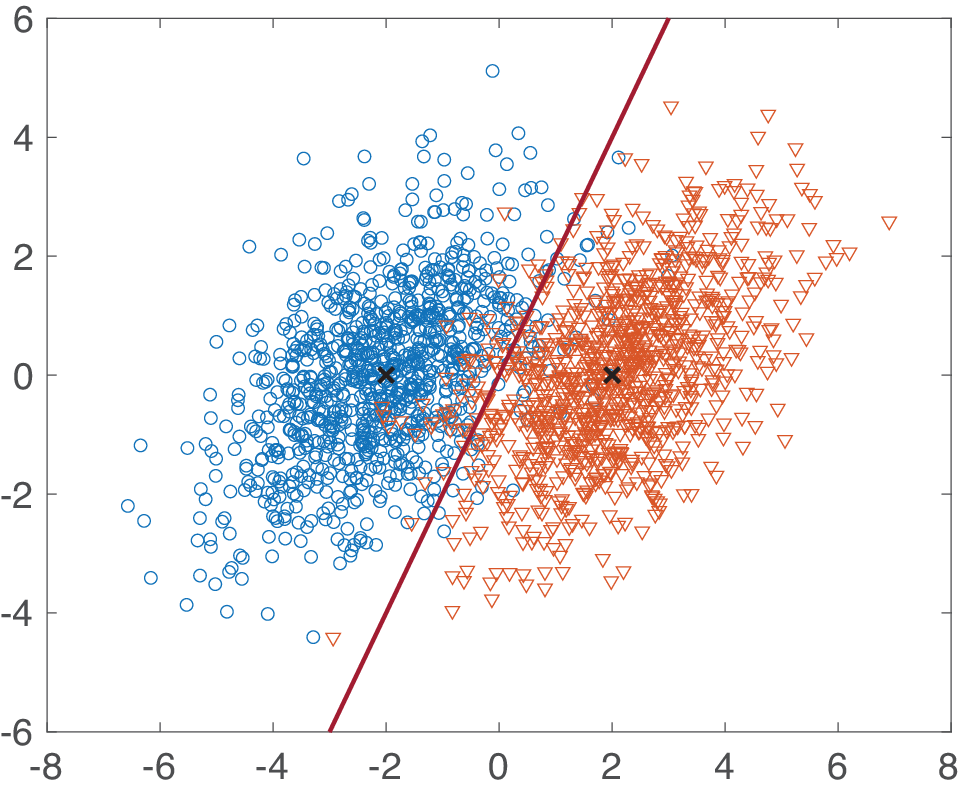}
    \label{Fig: GMM}}
    \subfigure[Discriminant score function]{\includegraphics[width=0.48\columnwidth]{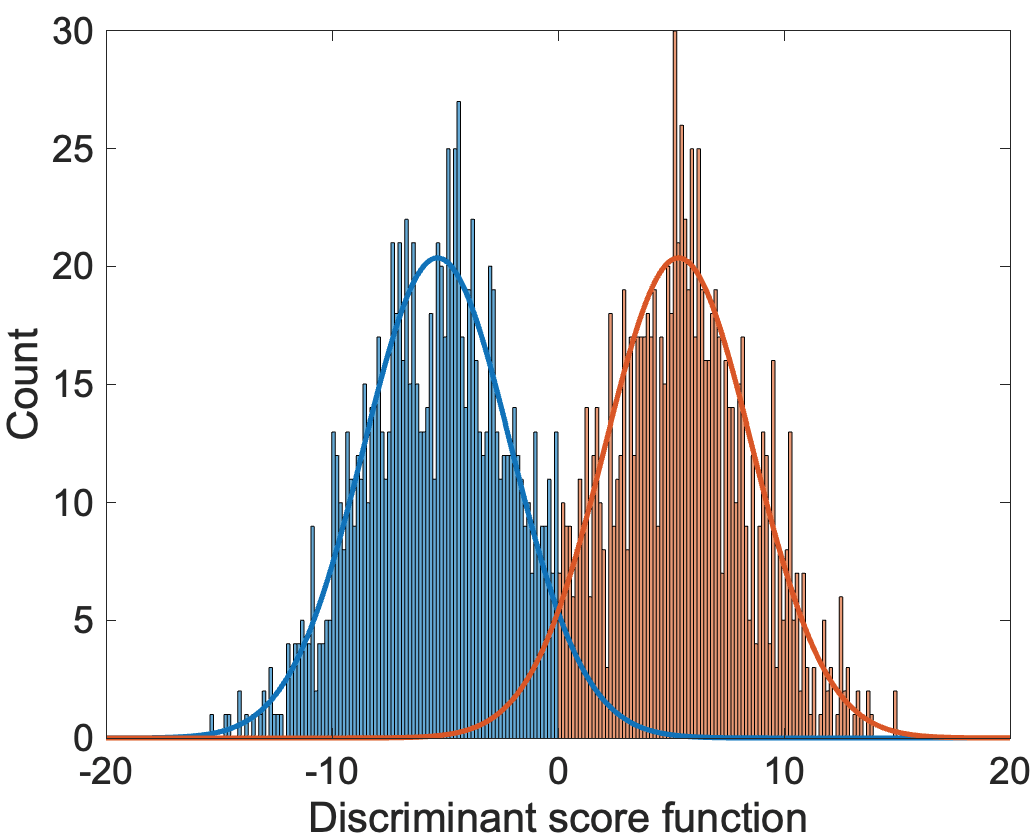}
    \label{Fig: Disciminant function}}
    \caption{An illustration of statistical inference. (a) Gaussian Mixture Model with the red line indicating the decision boundary. (b) Distribution of the discriminant score function fitted with two Gaussian distributions.\vspace{-4mm}}
\end{figure}

\vspace{-3mm}
\section{Preliminary of Statistical Inference}\label{Sec: Preliminary and system configuration}
For the purpose of analytical tractability, we consider statistical inference being adopted at the edge server. The derived results and design are validated for the case of deep neural networks by experiments in Section~\ref{Sec: CNN Classification Case}. In this section, the preliminary of statistical inference is provided as follows.

In an EI-Sense system, a general multi-class classifier, e.g., an $L_c$-class classifier, can be implemented using the method of one-versus-one, which decomposes the classifier into $J=L_c(L_c-1)/2$ binary classifiers~\cite{bishop2006pattern}. 
For simplicity and clarity, we perform analysis based on binary classification, which is also known as \emph{hypothesis testing}.
For a pair of classes, e.g., class-1 and class-2, the decision of feature vector $\bx$ is based on the comparison of their likelihoods $p_1(\bx)$ and $p_2(\bx)$.
We introduce the \emph{minus-log likelihood ratio} as the \emph{discriminant score function} to facilitate classification:
\begin{equation}
    \xi(\bx)\triangleq-\ln\frac{p_1(\bx)}{p_2(\bx)}=-\ln p_1(\bx)+\ln p_2(\bx).
\end{equation}
The decision rule becomes: If $\xi(\bx)<0$, then $\bx$ is classified to class-1, and vice versa.
In Gaussian mixture model, the likelihood function for samples drawn from class-$\ell$ is
\begin{equation}
    p_{\ell}(\bx)\!=\!\frac{1}{(2\pi)^{\frac{d}{2}}|\bSigma|^{\frac{1}{2}}}\exp\!\l(\!-\frac{1}{2}(\bx\!-\!\bmu_{\ell})^{\sT}\bSigma^{-1}(\bx\!-\!\bmu_{\ell})\!\r)\!,
\end{equation}
thus the discriminant score function is given as
\begin{equation}
    \xi(\bx)=(\bmu_2-\bmu_1)^{\sT}\bSigma^{-1}\bx+\frac{1}{2}(\bmu_1^{\sT}\bSigma^{-1}\bmu_1-\bmu_2^{\sT}\bSigma^{-1}\bmu_2).
\end{equation}
For illustration, the correlated two-dimensional Gaussian mixture model is shown in Fig.~\ref{Fig: GMM}, where the decision boundary is determined by $\xi(\bx)=0$. 
Since $\xi(\bx)$ is a linear transformation from a $d$-dimensional space to one-dimension, it is a Gaussian random variable when $\bx$ is a Gaussian distributed random vector given the class-$\ell$. It can be verified that 
$(\xi(\bx)|\ell=1)\sim\cN(-D,2D)$ and
    $(\xi(\bx)|\ell=2)\sim\cN(D,2D).$
The distribution of corresponding discriminant score function is presented in Fig.~\ref{Fig: Disciminant function}, where the Bayes error refers to the probability of samples from the other side of the boundary.
In this context, the sensing error is  $P_e=Q\l(\sqrt{{D}/{2}}\r)$.
In the context of sequential sensing, we suppose there are $M$ feature vectors, denoted as $\{\bx_{1},\cdots,\bx_{M}\}$, with $M\leq K$, are successfully received by the edge server.
The score function can be denoted as
\begin{align}
    \Xi(\{\bx_{k}\}_{k=1}^{M})&=-\ln\frac{p_1(\bx_{1},\cdots,\bx_{M})}{p_2(\bx_{1},\cdots,\bx_{M})}\nn\\
    &=\sum_{k=1}^{M}-\ln\frac{p_1(\bx_{k})}{p_2(\bx_{k})}=\sum_{k=1}^{M}\xi(\bx_{k}).
\end{align}
The mean and variance of the discriminant score function are $\mE[\Xi|\ell=1]=-MD$, $\mE[\Xi|\ell=2]=MD$ and $\var[\Xi|\ell=1]=\var[\Xi|\ell=2]=2MD$.
Hence, using multiple observations, the score functions becomes more separable, enhancing the inference performance.
The equivalent discriminant gain is $M$ times of the original one.

\vspace{-3mm}
\section{A Source-Channel Tradeoff}\label{Sec: Analysis}
In this section, we derive the inherent source-channel tradeoff of the ultra-low-latency EI-Sense system. Specifically, we analyze how source distortion, caused by block quantization, affects feature quality by measuring the reduction in discriminant gain. Additionally, we investigate the impact of channel reliability, degraded by packet loss, on the quantity of received features, and analyze the system performance by quantifying the resulting sensing errors. Finally, combining these results yields the desired source-channel tradeoff.

\vspace{-3mm}
\subsection{Effect of Source Distortion on Sensing}\label{Sec: Effect of Source Coding}
\subsubsection{Isotropic Gaussian Noise Approximation}
The source distortion of individual feature vectors originates from the quantization noise. 
During the quantization process, the errors induced by the uniform scalar quantizer can be approximated by adding i.i.d. uniform noise into the feature vector $\bx$, as shown below (see Fig.~\ref{Fig: Quantization noise}): 
\begin{equation}
    \tbx'=\cQ_{\Delta,U}(\tbx)=\tbx+\bn_q,
\end{equation}
where each element in $\bn_q$, i.e., $n_{q,i}=q(\tx_i)-\tx_i$, follows the i.i.d. uniform distribution: $n_{q,i}\sim\cU\l(-\frac{\Delta}{2},\frac{\Delta}{2}\r)$. 
Accordingly, the mean and variance of the quantization noise $\bn_q$ are $\mE[\bn_q]=\b0$ and $\var[\bn_q]=\frac{\Delta^2}{12}\bI$, respectively.
Hence, the source encoded feature vector for uploading is expressed as $\cF(\bx)=\bA\bx+\bn_q$.
At the server's end, the received feature vector after decoding operation is
\begin{equation}
    \hbx=\cF^{-1}(\cF(\bx))=\bA^{\sT}(\bA\bx+\bn_q)=\bx+\bA^{\sT}\bn_q,
\end{equation}
where we denote the final source noise as $\bz_q\triangleq\hbx-\bx=\bA^{\sT}\bn_q$.
\begin{lemma}[Block quantization noise]\label{Lemma: source distortion}
    The source distortion caused by block quantization, which consists of transform coding and uniform scalar quantization, can be approximated as isotropic Gaussian noise: $\bz_q\sim\cN(\b0,\sigma_q^2\bI)$. The variance of $\bz_q$ is given by
    \begin{equation}
        \sigma_q^2=\frac{\Delta^2}{12}=\frac{U^2}{3(2^R-1)^2}.\label{Eqn: variance of source distortion}
    \end{equation}
\end{lemma}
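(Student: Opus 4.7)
The plan is to establish Lemma~\ref{Lemma: source distortion} in three stages: verify the first- and second-order statistics of the quantization noise, propagate these through the orthogonal inverse transform, and then justify the Gaussian shape. The building blocks are already embedded in the setup preceding the lemma, so the proof is mostly bookkeeping plus one approximation argument.

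First I would work out the per-component statistics of $\bn_q$. Under the standard high-resolution quantization assumption (so that the input density is approximately constant across each quantization cell), each coordinate $n_{q,i} = q(\tx_i) - \tx_i$ is uniformly distributed on $(-\Delta/2,\Delta/2)$, and the components are mutually independent because the scalar quantizer acts element-wise while the Karhunen-Lo\`eve transform renders the coordinates of $\tbx$ uncorrelated (jointly Gaussian, hence independent). From this I immediately get $\mE[\bn_q]=\b0$ and $\var[\bn_q] = \frac{\Delta^2}{12}\bI$, exactly as stated in the text preceding the lemma.

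Next I would push these statistics through the decoder. By definition $\bz_q = \bA^{\sT}\bn_q$, so
\begin{equation}
    \mE[\bz_q] = \bA^{\sT}\mE[\bn_q] = \b0,
\end{equation}
and, using the orthogonality constraint $\bA^{\sT}\bA = \bI$,
\begin{equation}
    \var[\bz_q] = \bA^{\sT}\var[\bn_q]\bA = \frac{\Delta^2}{12}\bA^{\sT}\bA = \frac{\Delta^2}{12}\bI,
\end{equation}
which is isotropic with the desired variance. Substituting $\Delta = 2U/(2^R-1)$ yields $\sigma_q^2 = U^2/[3(2^R-1)^2]$, completing the identification of \eqref{Eqn: variance of source distortion}.

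The one genuinely approximate step — and the only place where I expect any pushback — is the claim that $\bz_q$ is Gaussian rather than merely a zero-mean isotropic random vector. My justification would invoke the central limit theorem on each coordinate of $\bz_q$: the $i$-th entry is $\sum_{j=1}^{d} (\bA^{\sT})_{ij} n_{q,j}$, a weighted sum of $d$ independent bounded uniform random variables with weights satisfying $\sum_j (\bA^{\sT})_{ij}^2 = 1$. Provided $\bA$ does not concentrate its mass on a handful of coordinates (which holds for standard KLT bases such as the DCT whose entries are $O(1/\sqrt{d})$), the Lindeberg condition is satisfied and each component is approximately $\cN(0,\Delta^2/12)$ for moderate-to-large $d$. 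Independence across components follows because uncorrelated jointly Gaussian variables are independent. I would state this Gaussianity as a standard high-resolution/large-$d$ approximation and flag it as the source of the approximation, rather than an exact equality, to keep the lemma's scope honest.
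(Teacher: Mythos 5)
Your proof is correct and follows essentially the same route as the paper's: compute the mean and covariance of $\bz_q$ by pushing the i.i.d.\ uniform quantization noise through the orthogonal transform $\bA^{\sT}$, and justify the Gaussian shape by applying a central limit theorem to each coordinate $z_{q,i}=\sum_j a_{i,j} n_{q,j}$ (the paper cites Fisher's CLT where you invoke the Lindeberg condition). Your remark that the weights must not concentrate on a few coordinates is a useful precision the paper elides, but it does not change the argument's structure.
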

\begin{proof}
    We prove the normality and isotropy of source distortion as follows.
    \begin{itemize}
        \item[i)] \textbf{Gaussian approximation.} The $i$-th element of the source noise can be expressed as $z_{q,i}=(\bA^{\sT}\bn_q)_i=\sum_{j=1}^da_{i,j}n_{q,j}$, where $a_{i,j}$ is the $(i,j)$-th coefficient in orthogonal matrix $\bA$ satisfying $\sum_{j=1}^da_{i,j}^2=1$, and $n_{q,j}$ is the i.i.d. random variable that follows the uniform distribution $\cU(-\Delta/2,\Delta/2)$. According to the Fisher's central limit theorem (see Theorem 3.1 in~\cite{fisher1992skorohod}), the noise $z_{q,i}$ converges Gaussian distribution $\cN(0,\sigma_q^2)$ as the feature vector's dimension $d$ grows, with the variance being $\sigma_q^2=\frac{\Delta^2}{12}$.
        \item[ii)] \textbf{Isotropic noise.} The mean and variance of the quantization noise $\bn_q$ are $\mE[\bn_q]=\b0$ and $\var[\bn_q]=\sigma_q^2\bI$. Therefore, the mean of source noise $\bz_q=\bA^{\sT}\bn_q$ is $\mE[\bz_q]=\bA^{\sT}\mE[\bn_q]=\b0$, and the covariance matrix is
        \begin{align}
            \cov(\bz_q,\bz_q)&=\mE[\bz_q\bz_q^{\sT}]-\mE[\bz_q]\mE[\bz_q]^{\sT}=\mE[\bA^{\sT}\bn_q\bn_q^{\sT}\bA]\nn\\
            &=\bA^{\sT}\mE[\bn_q\bn_q^{\sT}]\bA=\sigma_q^2\bA^{\sT}\bA=\sigma_q^2\bI.
        \end{align}
    \end{itemize}
    This completes the proof.
\end{proof}

\begin{figure}[t!]
    \centering
    \subfigure[Quantization noise]{\includegraphics[width=0.48\columnwidth]{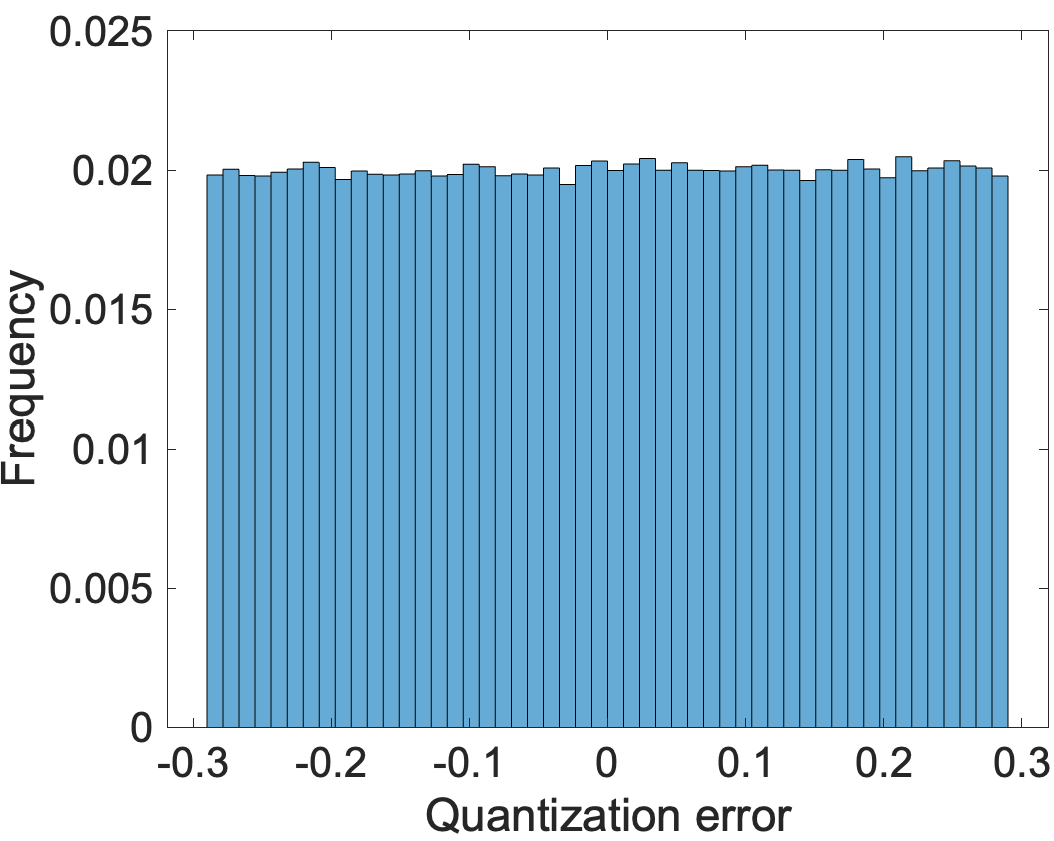}
    \label{Fig: Quantization noise}}
    \subfigure[Source distortion]{\includegraphics[width=0.48\columnwidth]{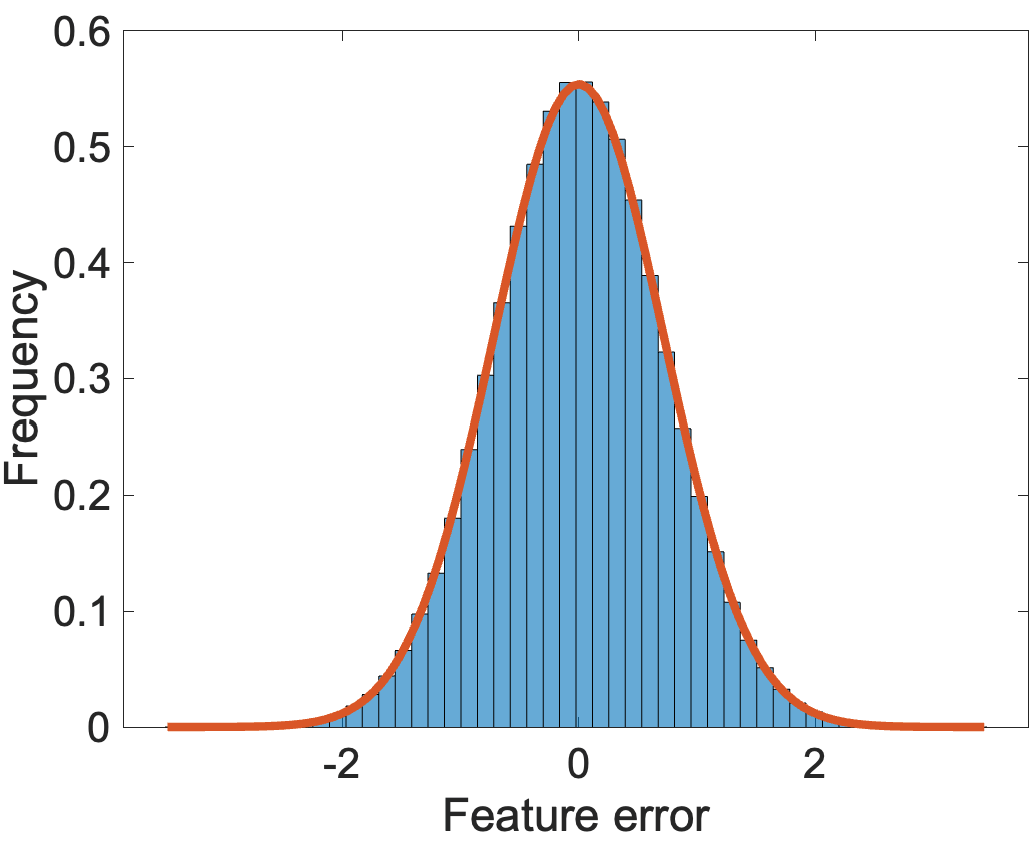}
    \label{Fig: Source distortion}}
    \caption{Histogram of quantization noise and source distortion for a single feature. The quantization level is set to 4 bits. (a) Quantization error, defined as the difference between the input and output of the uniform scalar quantizer. (b) Feature error, defined as the difference between the decoded feature and the original feature. The data is fitted using a zero-mean Gaussian distribution.\vspace{-4mm}}
\end{figure}
To illustrate the approximation, the histogram of the source distortion in one dimension is shown in Fig.~\ref{Fig: Source distortion}. The feature vector dimension is set to $d = 50$, and the approximation closely matches the observed distribution.

Without loss of generality, we assume that the feature vector $\bx$ is sampled from the cluster corresponding to class-$\ell$. The received noisy feature vector can be expressed as the sum of two independent multivariate Gaussian vectors: $\hbx = \bx + \bz_q$, where $\bx \sim \cN(\bmu_{\ell}, \bSigma)$ and $\bz_q \sim \cN(\b0, \sigma_q^2 \bI)$. Therefore, the received feature vector $\hbx$ follows the Gaussian distribution: $(\hbx | \ell) \sim \cN(\bmu_{\ell}, \bSigma + \sigma_q^2 \bI)$.

\subsubsection{Discriminant Gain Reduction}
For clarity, we denote the discriminant gain for the original feature vectors without distortion as $D_0$ which is specified in \eqref{Eqn: discriminant gain}. 
When source distortion $\bz_q \sim \cN(\b0, \sigma_q^2 \bI)$ is introduced, the effective discriminant gain for the received feature vectors becomes:
\begin{equation}
    D(\sigma_q^2)=\frac{1}{2}(\bmu_1-\bmu_2)^{\sT}(\bSigma+\sigma_q^2\bI)^{-1}(\bmu_1-\bmu_2).\label{Eqn: distorted discriminant gain}
\end{equation}

To quantify the change in discriminant gain caused by the source distortion, we present the following theorem.
\begin{theorem}[Discriminant gain reduction]\label{Thm: discriminant gain reduction}
    Consider the source distortion from block quantization as $\bz_q\sim\cN(\b0,\sigma_q^2\bI)$. The reduction in discriminant gain can be bounded by
    \begin{align}
        \sigma_q^2\tr\{\bSigma\!+\!\sigma_q^2\bI\}^{-1}\!\leq\! \frac{D_0\!-\!D(\sigma_q^2)}{D_0}\!\leq\!\sigma_q^2\tr\{(\bSigma\!+\!\sigma_q^2\bI)^{-1}\}.\label{Eqn: discriminant gain reduction}
    \end{align}
\end{theorem}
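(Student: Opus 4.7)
The plan is to reduce the ratio $(D_0 - D(\sigma_q^2))/D_0$ to a generalized Rayleigh quotient and then bound its extremes via the spectrum of $\bSigma + \sigma_q^2 \bI$. First I would apply the elementary matrix identity
\begin{equation}
\bSigma^{-1} - (\bSigma + \sigma_q^2 \bI)^{-1} = \bSigma^{-1}\bigl[(\bSigma + \sigma_q^2 \bI) - \bSigma\bigr](\bSigma + \sigma_q^2 \bI)^{-1} = \sigma_q^2\,\bSigma^{-1}(\bSigma + \sigma_q^2 \bI)^{-1}, \nonumber
\end{equation}
so that, writing $\bu \triangleq \bmu_1 - \bmu_2$, the numerator and denominator of interest become
\begin{equation}
D_0 - D(\sigma_q^2) = \frac{\sigma_q^2}{2}\,\bu^{\sT}\bSigma^{-1}(\bSigma + \sigma_q^2 \bI)^{-1}\bu, \qquad D_0 = \frac{1}{2}\,\bu^{\sT}\bSigma^{-1}\bu. \nonumber
\end{equation}

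Next I would diagonalize $\bSigma = \bU \bLambda \bU^{\sT}$ with $\bLambda = \diag(\lambda_1,\dots,\lambda_d)$. Since $\bSigma^{-1}$ and $(\bSigma + \sigma_q^2 \bI)^{-1}$ share the eigenbasis $\bU$, setting $\bw = \bU^{\sT}\bu$ gives
\begin{equation}
\frac{D_0 - D(\sigma_q^2)}{D_0} = \sigma_q^2 \cdot \frac{\sum_{i=1}^{d} w_i^2 / \bigl[\lambda_i(\lambda_i + \sigma_q^2)\bigr]}{\sum_{i=1}^{d} w_i^2 / \lambda_i}. \nonumber
\end{equation}
The right-hand factor is a convex combination of the quantities $1/(\lambda_i + \sigma_q^2)$ with positive weights $w_i^2/\lambda_i$, so it lies between $\min_i 1/(\lambda_i + \sigma_q^2)$ and $\max_i 1/(\lambda_i + \sigma_q^2)$.

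The final step is to loosen these eigenvalue-based bounds into the trace-based bounds stated in the theorem. For the upper bound, I would use $\max_i 1/(\lambda_i + \sigma_q^2) \leq \sum_i 1/(\lambda_i + \sigma_q^2) = \tr\{(\bSigma + \sigma_q^2 \bI)^{-1}\}$, which is valid because every summand is positive. For the lower bound, I would use $\max_i(\lambda_i + \sigma_q^2) \leq \sum_i (\lambda_i + \sigma_q^2) = \tr\{\bSigma + \sigma_q^2 \bI\}$, again because $\lambda_i \geq 0$, so that $\min_i 1/(\lambda_i + \sigma_q^2) = 1/\max_i(\lambda_i + \sigma_q^2) \geq 1/\tr\{\bSigma + \sigma_q^2 \bI\}$. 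Multiplying through by $\sigma_q^2$ yields the claim.

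The calculations are all short and routine; the only real subtlety is recognizing that the key ratio is a generalized Rayleigh quotient of the commuting pair $\bigl(\bSigma^{-1}(\bSigma + \sigma_q^2\bI)^{-1},\,\bSigma^{-1}\bigr)$, whose generalized eigenvalues simplify to $1/(\lambda_i + \sigma_q^2)$. Once that identification is made, the only freedom left is in how tightly one trades the sharp eigenvalue bounds for the trace bounds the theorem advertises, and the simple chain $\max \leq \mathrm{sum}$ on positive quantities does exactly this.
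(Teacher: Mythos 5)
Your proof is correct, and it takes a genuinely different route from the paper. Both start from the same matrix identity $\bSigma^{-1}-(\bSigma+\sigma_q^2\bI)^{-1}=\sigma_q^2\bSigma^{-1}(\bSigma+\sigma_q^2\bI)^{-1}$, but after that the arguments diverge. The paper rewrites $D_0-D(\sigma_q^2)$ and $D_0$ as traces, then invokes the inequality $\tr(\bA\bB)\leq\tr(\bA)\tr(\bB)$ for positive semi-definite matrices after a cyclic permutation, and obtains the lower bound by inserting a factor of $\tr\{\bSigma+\sigma_q^2\bI\}$ and applying the analogous estimate in reverse. You instead diagonalize $\bSigma$ and show that the ratio $(D_0-D(\sigma_q^2))/(\sigma_q^2 D_0)$ is literally a convex combination of the scalars $1/(\lambda_i+\sigma_q^2)$ with normalized weights $w_i^2/\lambda_i$, so the sharp bounds are $\min_i$ and $\max_i$ of those scalars; you then loosen the maximum to the full sum to get $\tr\{(\bSigma+\sigma_q^2\bI)^{-1}\}$, and the minimum via $\max_i(\lambda_i+\sigma_q^2)\leq\tr\{\bSigma+\sigma_q^2\bI\}$. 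Your route is more elementary (it needs only positivity and $\max\leq\mathrm{sum}$ rather than a trace inequality for PSD products) and also more informative: it exposes that the theorem's trace bounds are weakenings of the tighter spectral bounds $\sigma_q^2/(\lambda_{\max}+\sigma_q^2)\leq(D_0-D)/D_0\leq\sigma_q^2/(\lambda_{\min}+\sigma_q^2)$. It also side-steps a subtlety in the paper's argument: after the cyclic shift, the paper applies its Lemma A.2 with $\bB=(\bmu_2-\bmu_1)(\bmu_2-\bmu_1)^{\sT}\bSigma^{-1}$, which is not symmetric, so the hypotheses of that lemma as stated are not met even though the conclusion happens to hold. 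Your spectral argument avoids that issue entirely.
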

\begin{proof}
    (See Appendix~\ref{Proof: discriminant gain reduction}).
\end{proof}

The change in discriminant gain is quantified, allowing us to observe the relationship between discriminant gain reduction and source distortion. 
The distortion caused by source coding of the feature vector reduces the discriminant gain, as indicated by the relation $D_0 - D(\sigma_q^2) \geq 0$ derived from equation~\eqref{Eqn: discriminant gain reduction}, leading to $D(\sigma_q^2) \leq D_0$. 
As the discriminant gain decreases, the classifier's performance deteriorates accordingly. 
Additionally, the reduction in discriminant gain is an increasing function of the source distortion variance, $\sigma_q^2$, which comes from quantization effects. 
To maintain an acceptable sensing error, it is crucial to control the quantization levels above a specific threshold. 
With a sufficiently large number of quantization bits, i.e., $R \gg 1$, the distorted discriminant gain approximates $D(R) = D_0(1 - O(4^{-R}))$. 
Thus, as the number of quantization bits increases, the difference between $D(R)$ and $D_0$ diminishes exponentially.

\vspace{-3mm}
\subsection{Effect of Channel Reliability on Sensing}\label{Sec: Effect of Channel Coding}
Due to block fading channels, each transmission of a feature vector corresponds to a different packet loss probability. Specifically, the transmission of feature vector $\bx_{k}$ in the $k$-th time slot has a packet loss probability of $\varepsilon_{p,k}$. 
The number of successful transmissions out of a total of $K$ time slots, denoted by $M$, follows a Poisson binomial distribution: $M \sim {\sf PoiBin}(K; 1-\varepsilon_{p,1}, \ldots, 1-\varepsilon_{p,K})$. The \emph{probability mass function} (PMF) is given by
\begin{equation}
    \Pr(M=m)=\sum_{\cA\in \cF_m}\prod_{i\in\cA}(1-\varepsilon_{p,i})\prod_{j\in\cA^c}\varepsilon_{p,j},
\end{equation}
where $\cF_m$ is the set of all subsets of $m$ integers that can be selected from $\{1,2,\cdots,K\}$.
The score function $\Xi=\sum_{k=1}^{M}\xi(\bx_{k})$ is a random sum, where both $M$ and $\xi(\bx_k)$ are random variables. 
The sensing performance can be characterized by the following theorem.
\begin{theorem}[Sensing performance]\label{Thm: performance of sequential sensing}
    Consider each observation contributes a discriminant gain of $D$. When $K$ observations are transmitted, each with an individual packet loss probability $\varepsilon_{p,k}$, the sensing error probability can be upper bounded by
    \begin{equation}
        P_e\leq\l(\exp\l(-\frac{1}{4}D\r)+\l(1-\exp\l(-\frac{1}{4}D\r)\r)\overline{\varepsilon_p}\r)^K,\label{Eqn: error of sequential sensing}
    \end{equation}
    where $\overline{\varepsilon_p}=\frac{1}{K}\sum_{k=1}^K\varepsilon_{p,k}$ is average packet loss probability.
\end{theorem}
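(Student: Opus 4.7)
The plan is to condition on the random number $M$ of successfully received feature vectors, bound the conditional Bayes error using a Chernoff-type inequality, and then average over $M$ using its probability generating function together with AM--GM.

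First I would note that, conditioned on $M=m$ successful receptions, the aggregated score function $\Xi=\sum_{k=1}^{m}\xi(\bx_k)$ is Gaussian with mean $\pm m D$ and variance $2mD$ under the two hypotheses, as established in the preliminary section. Hence the conditional Bayes error is $P_e(m)=Q\bigl(\sqrt{mD/2}\bigr)$. Applying the standard Chernoff bound $Q(x)\le\exp(-x^2/2)$ yields
\begin{equation}
P_e(m)\le\exp\!\left(-\tfrac{1}{4}mD\right).
\end{equation}
Taking expectation with respect to $M$ gives $P_e\le\mE\bigl[\exp(-MD/4)\bigr]$, which is exactly the probability generating function of $M$ evaluated at $z_0=\exp(-D/4)$.

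Next I would use independence across time slots: each slot $k$ contributes an independent Bernoulli indicator $I_k$ with $\Pr(I_k=1)=1-\varepsilon_{p,k}$, and $M=\sum_k I_k$ is Poisson binomial. Therefore
\begin{equation}
\mE[z_0^{M}]=\prod_{k=1}^{K}\bigl((1-\varepsilon_{p,k})z_0+\varepsilon_{p,k}\bigr)=\prod_{k=1}^{K}\Bigl(\exp(-D/4)+\varepsilon_{p,k}\bigl(1-\exp(-D/4)\bigr)\Bigr).
\end{equation}
Finally, since $1-\exp(-D/4)\ge 0$, each factor is a positive affine function of $\varepsilon_{p,k}$, and applying the AM--GM inequality to the $K$ nonnegative factors bounds the product by the $K$th power of their arithmetic mean, which equals $\exp(-D/4)+\overline{\varepsilon_p}\bigl(1-\exp(-D/4)\bigr)$. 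This gives the claimed bound.

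The main obstacle is not any single technical step but rather the clean packaging: recognizing that $\mE[\exp(-MD/4)]$ is precisely the PGF of a Poisson binomial at a specific point, so that independence across slots turns the expectation into a product, and then using AM--GM to absorb the heterogeneity of $\{\varepsilon_{p,k}\}$ into the single average $\overline{\varepsilon_p}$. A minor delicacy worth checking is the tightness of $Q(x)\le\exp(-x^2/2)$ versus the sharper $\tfrac{1}{2}\exp(-x^2/2)$; using the looser form ensures the factor $\exp(-D/4)$ appears cleanly in the final expression, matching the stated bound.
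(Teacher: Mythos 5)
Your proof is correct and follows essentially the same route as the paper: condition on the number $M$ of successful receptions, reduce to $\mE\bigl[\exp(-MD/4)\bigr]$ via a Chernoff-type bound, expand the Poisson--binomial PGF as a product over slots, and then apply AM--GM (which is precisely the paper's Jensen step) to replace the heterogeneous $\{\varepsilon_{p,k}\}$ by their mean $\overline{\varepsilon_p}$. The only cosmetic difference is that you first compute the exact conditional Bayes error $Q\bigl(\sqrt{mD/2}\bigr)$ and then bound it, whereas the paper applies Markov's inequality directly to the unconditional tail probability and optimizes over $t$ at the end; the two coincide here because the Chernoff optimizer $t^\star=\tfrac12$ does not depend on $m$.
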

\begin{proof}
    (See Appendix~\ref{Proof: performance of sequential sensing}).
\end{proof}

The above result suggests that the sensing error probability can be reduced by enhancing the discriminant gain of the transmitted feature vectors, decreasing the packet loss probability during transmission, or increasing the number of diverse feature vector transmissions.

The packet loss probability is related to the receive SNR as shown in \eqref{Eqn: packet loss probability}.
For Rayleigh fading channel, the receive SNR, $\gamma_k=\gamma_0\|\bh_{k}\|^2$, is a sum of $L$ i.i.d. random variables, each having an exponential distribution $|h_k^{(l)}|^2\sim{\sf Exp}(1)$.
Therefore, $\gamma_k$ follows the Erlang distribution: $\gamma_{k}\sim{\sf Erlang}(L,1/\gamma_0)$, with PDF as shown below:\vspace{-1mm}
\begin{equation}
    p_{\gamma}(\gamma)=\frac{(\gamma/\gamma_0)^{L-1}\exp(-\gamma/\gamma_0)}{\gamma_0\Gamma(L)},\vspace{-2mm}\label{Eqn: Erlang distribution}
\end{equation}
where $\Gamma(\cdot)$ denotes the Gamma function.
The average packet loss probability can be estimated as 
\begin{align}
    \overline{\varepsilon_{p}}\doteq\int_{0}^{\infty}Q\l(\sqrt{\frac{N}{V(\gamma)}}\l(C(\gamma)-R_c\r)\r)p_{\gamma}(\gamma)\d\gamma.\label{Eqn: expected packet loss probability}
\end{align}
Further considering the coding rate $R_c$, the average packet loss probability, which is a function of the coding rate and received SNR, can be characterized in the following proposition.

\begin{figure}[t!]
    \centering
    \subfigure[Packet loss vs Transmit SNR]{\includegraphics[width=0.48\columnwidth]{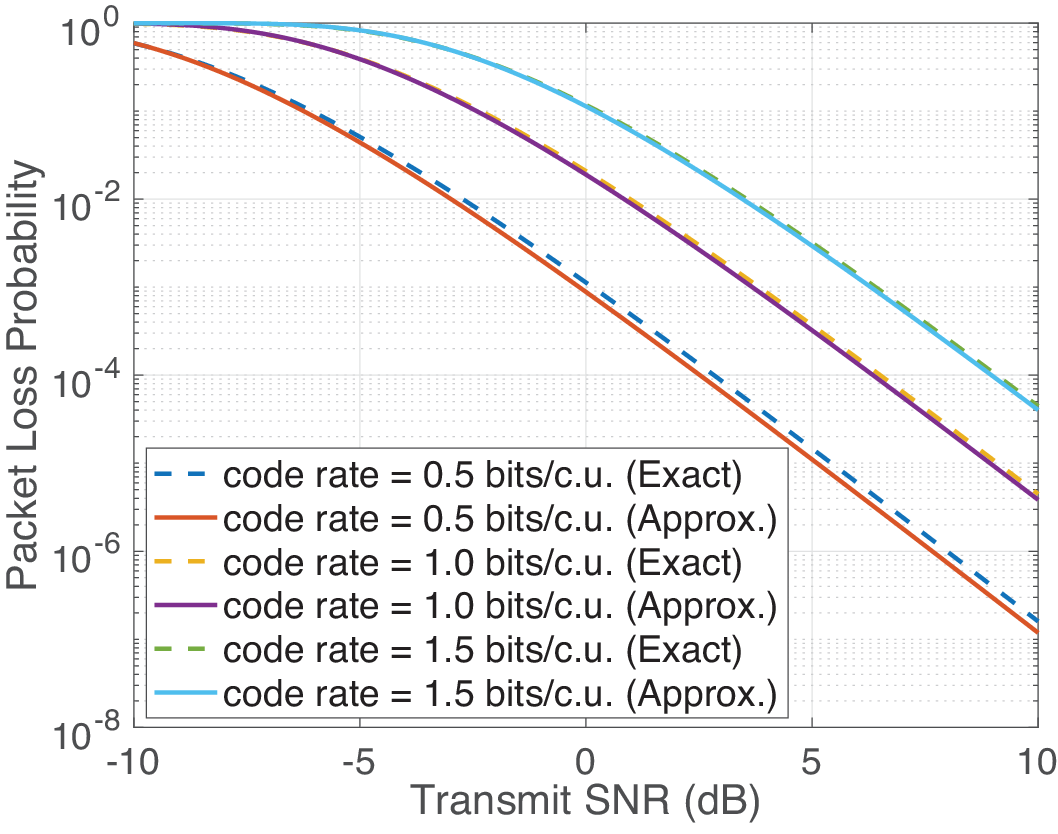}}
    \subfigure[Packet loss vs Coding rate]{\includegraphics[width=0.48\columnwidth]{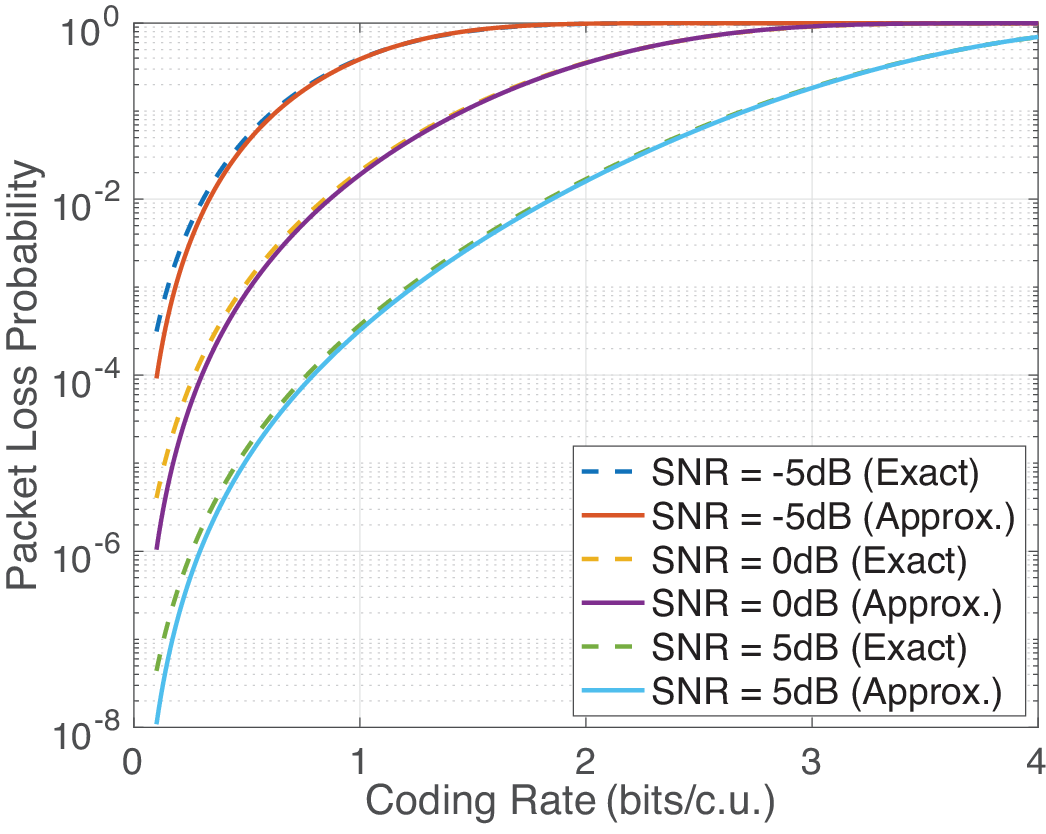}}\vspace{-2mm}
    \caption{Comparison between the exact and approximated average packet loss probability as a function of (a) transmit SNR and (b) coding rate. The parameters are set to $L=4$ and $N=100$.\vspace{-5mm}}\label{Fig: approximation of average packet loss probability}
\end{figure}

\begin{proposition}[Average packet loss probability]\label{Proposition: average packet loss}
    The average packet loss probability $\overline{\varepsilon_{p}}$ can be approximated by
    \begin{align}
        \overline{\varepsilon_p} = \!\l[1-\frac{1}{\Gamma(L)}\Gamma\!\l(L,\frac{1}{\gamma_0}(2^{R_c}\!-1)\!\r)\r]\!\l\{1+O\!\l(\frac{1}{N}\r)\r\},\label{Eqn: average packet loss probability}
    \end{align}
    where $\Gamma(\cdot,\cdot)$ denotes the upper incomplete Gamma function.
\end{proposition}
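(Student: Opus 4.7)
The plan is to start from the integral representation \eqref{Eqn: expected packet loss probability} and exploit the sharp concentration of the Q-function at the capacity threshold as the blocklength grows. Let $\gamma^\star = 2^{R_c}-1$ denote the unique root of $C(\gamma)-R_c=0$. For any $\gamma\neq\gamma^\star$, the argument $\sqrt{N/V(\gamma)}(C(\gamma)-R_c)$ scales as $\pm\sqrt{N}$, so the integrand transitions from $1$ to $0$ within a window of width $O(1/\sqrt{N})$ around $\gamma^\star$. Pointwise one therefore has $Q\bigl(\sqrt{N/V(\gamma)}(C(\gamma)-R_c)\bigr)\to\mathbf{1}\{\gamma<\gamma^\star\}$, and the leading-order value of the integral is the Erlang CDF at $\gamma^\star$, which by direct integration of \eqref{Eqn: Erlang distribution} equals $1-\frac{1}{\Gamma(L)}\Gamma(L,\gamma^\star/\gamma_0)$. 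This already identifies the closed-form factor in \eqref{Eqn: average packet loss probability}; what remains is to quantify the remainder.

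To obtain the $O(1/N)$ correction, I would apply the standard finite-blocklength linearization: replace $Q(\cdot)$ by the piecewise-linear surrogate $\Xi(\gamma)$ that equals $1$ for $\gamma\leq\gamma^\star-1/(2\mu)$, equals $0$ for $\gamma\geq\gamma^\star+1/(2\mu)$, and interpolates linearly through $(\gamma^\star,1/2)$ with slope $-\mu$, where $\mu=C'(\gamma^\star)\sqrt{N/(2\pi V(\gamma^\star))}$ matches the Q-function's derivative at the threshold. Substituting $\Xi$ into \eqref{Eqn: expected packet loss probability} gives
\begin{equation*}
\overline{\varepsilon_p}\approx F_{\Gamma}\!\left(\gamma^\star-\tfrac{1}{2\mu}\right)+\int_{\gamma^\star-1/(2\mu)}^{\gamma^\star+1/(2\mu)}\Xi(\gamma)\,p_\gamma(\gamma)\,\d\gamma,
\end{equation*}
where $F_{\Gamma}$ is the Erlang CDF. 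Taylor-expanding both $F_{\Gamma}$ and $p_\gamma$ about $\gamma^\star$ in powers of $1/\mu=O(1/\sqrt{N})$, the odd-order contributions cancel thanks to the symmetry of the integration window about $\gamma^\star$, while the even-order contributions produce a correction of order $1/N$. Together with $F_{\Gamma}(\gamma^\star)=1-\frac{1}{\Gamma(L)}\Gamma(L,\gamma^\star/\gamma_0)$, this yields \eqref{Eqn: average packet loss probability}.

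The main obstacle is putting the two error sources on the same $O(1/N)$ footing simultaneously: the pointwise linearization residual $|Q(\cdot)-\Xi(\gamma)|$ must integrate against $p_\gamma$ to a contribution no larger than $O(1/N)$ (using a bound of the form $|Q(\cdot)-\Xi(\gamma)|\lesssim \min\bigl(1,N^{3/2}|\gamma-\gamma^\star|^3\bigr)$ near the threshold), and the leading $O(1/\sqrt{N})$ term of the density expansion must vanish by symmetry. The latter is where the choice of a window symmetric about $\gamma^\star$ is critical; without this symmetry, the correction would be $O(1/\sqrt{N})$ rather than $O(1/N)$. Smoothness and positivity of $p_\gamma$ on a neighborhood of $\gamma^\star$ follow directly from \eqref{Eqn: Erlang distribution} for any $R_c>0$ and any $L\geq 1$, so no additional regularity work is needed once these cancellations are made explicit.
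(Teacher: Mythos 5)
Your route is genuinely different from the paper's. The paper integrates by parts (moving the Erlang PDF into its CDF $F_\gamma$) and then applies Laplace's method to the resulting integral $\int \exp(-Ng(\gamma))\,\frac{Ng'(\gamma)}{\sqrt{2Ng(\gamma)}}F_\gamma(\gamma)\,\d\gamma$, with $g(\gamma)=\tfrac{1}{2V(\gamma)}(C(\gamma)-R_c)^2$ having a zero at $\gamma^\star=2^{R_c}-1$; the prefactor is evaluated at the saddle using $g'(\gamma)^2/(2g''(\gamma)g(\gamma))\big|_{\gamma^\star}=1$. You instead keep the PDF, view the Q-function as sharpening to the indicator $\mathbf{1}\{\gamma<\gamma^\star\}$, and account for the $O(1/N)$ correction through a piecewise-linear surrogate plus Taylor expansion and symmetry. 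Both correctly identify the leading-order answer $F_\gamma(\gamma^\star)=1-\Gamma(L,\gamma^\star/\gamma_0)/\Gamma(L)$, and your decomposition, viewing the concentration probabilistically, is arguably more transparent.

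However, your quantification of the error has a genuine gap. First, the stated pointwise bound $|Q(\cdot)-\Xi(\gamma)|\lesssim\min\bigl(1,N^{3/2}|\gamma-\gamma^\star|^3\bigr)$ is not correct: the surrogate $\Xi$ is linear in $\gamma$, but the Q-function's argument $x(\gamma)=\sqrt{N/V(\gamma)}\,(C(\gamma)-R_c)$ has curvature $x''(\gamma^\star)\sim\sqrt{N}$, so the residual near $\gamma^\star$ is dominated by a quadratic term of size $\sim\sqrt{N}(\gamma-\gamma^\star)^2$ (from the mismatch between $x(\gamma)$ and its tangent), not a cubic. Second, and more importantly, even if the cubic bound held, integrating $\min(1,N^{3/2}|\gamma-\gamma^\star|^3)$ against $p_\gamma$ over the transition window of width $\sim N^{-1/2}$ gives $N^{3/2}\cdot N^{-2}=O(N^{-1/2})$, not $O(1/N)$; outside the window the bound saturates to $1$ and is uninformative. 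So the absolute-value bound cannot deliver the claimed $O(1/N)$ on its own. The improvement to $O(1/N)$ comes entirely from the signed cancellation you mention — the $Q(-v)+Q(v)=1$ antisymmetry of the Q-function paired with the symmetric window makes the odd-order residual terms integrate against the even part of $p_\gamma$ to zero, and the remaining even residual (the quadratic term) is small enough to give $O(1/N)$ — but this is asserted, not demonstrated, and it is incompatible with treating the residual via an absolute-value envelope. Until the sign-tracking is carried out explicitly, the proof establishes only $O(1/\sqrt{N})$.
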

\begin{proof}
    (See Appendix~\ref{Proof: average packet loss}).
\end{proof}

The approximation is typically accurate, with a relative error of $O(N^{-1})$. 
As shown in Fig.~\ref{Fig: approximation of average packet loss probability}, the approximation in \eqref{Eqn: average packet loss probability} is compared with its ground truth, demonstrating that the approximation error is generally small.

Let $\beta\triangleq\frac{1}{\gamma_0}(2^{R_c}-1)$. One can verify that
    $\lim_{L\to\infty}\frac{\ep}{\beta^L/L!}=e^{-\beta},$
and thus the asymptotic behavior of the sensing error for large $L$ $(L\gg1)$ becomes:
\begin{equation}
    P_e\leq \exp\!\l(-\frac{1}{4}D\r)^K\l\{1+O\!\l(\frac{(2^{R_c}-1)^L}{\gamma_0^LL!}\r)\r\}.
\end{equation}
We note that, regardless of whether $\beta < 1$ or $\beta \geq 1$, the order term $O(\beta^L / L!)$ always decreases to zero asymptotically.
\begin{remark}[Channel diversity vs feature diversity]
    More antennas at the edge server allow for increased spatial diversity in this SIMO system, which enhances the reliability of transmitting feature vectors. This improved reliability results in more feature vectors being successfully decoded at the edge server for inference, thereby reducing the sensing error.
\end{remark}

\vspace{-5mm}
\subsection{Source-Channel Tradeoff}\label{Sec: Source-Channel Tradeoff}
As discussed, the performance of the EI-Sense system is influenced by both source distortion and channel reliability.
The source distortion, introduced by block quantization, affects the quality of the feature vectors, as discussed in Section~\ref{Sec: Effect of Source Coding}. 
The discriminant gain reduction caused by source distortion, quantified by Theorem~\ref{Thm: discriminant gain reduction}, indicates that a larger $\sigma_q^2$ leads to a decrease in the sensing performance. 
To mitigate this, increasing the number of quantization bits $R$, can reduce the distortion, thereby preserving the discriminant gain.
On the other hand, the reliability of the communication channel, influenced by packet loss probability $\varepsilon_p$, determines the quantity of successfully received feature vectors. 
As analyzed in Section~\ref{Sec: Effect of Channel Coding}, the sensing error probability decreases with enhanced channel reliability, which can be achieved by reducing coding rate or improving the receive SNR as characterized in Proposition~\ref{Proposition: average packet loss}.
The tradeoff between source and channel coding emerges from the need to balance feature quality and quantity. 
Reducing the channel coding rate improves the probability of successful transmission, allowing more feature vectors to be received. 
However, this comes at the expense of a higher source distortion due to fewer bits being allocated for quantization. 
Conversely, allocating more bits to source coding reduces distortion but increases the likelihood of packet loss due to a higher coding rate.
To achieve the minimum sensing error probability, it is crucial to find the optimal balance between source and channel coding. 

\begin{figure}[t!]
    \centering
    \includegraphics[width=0.58\columnwidth]{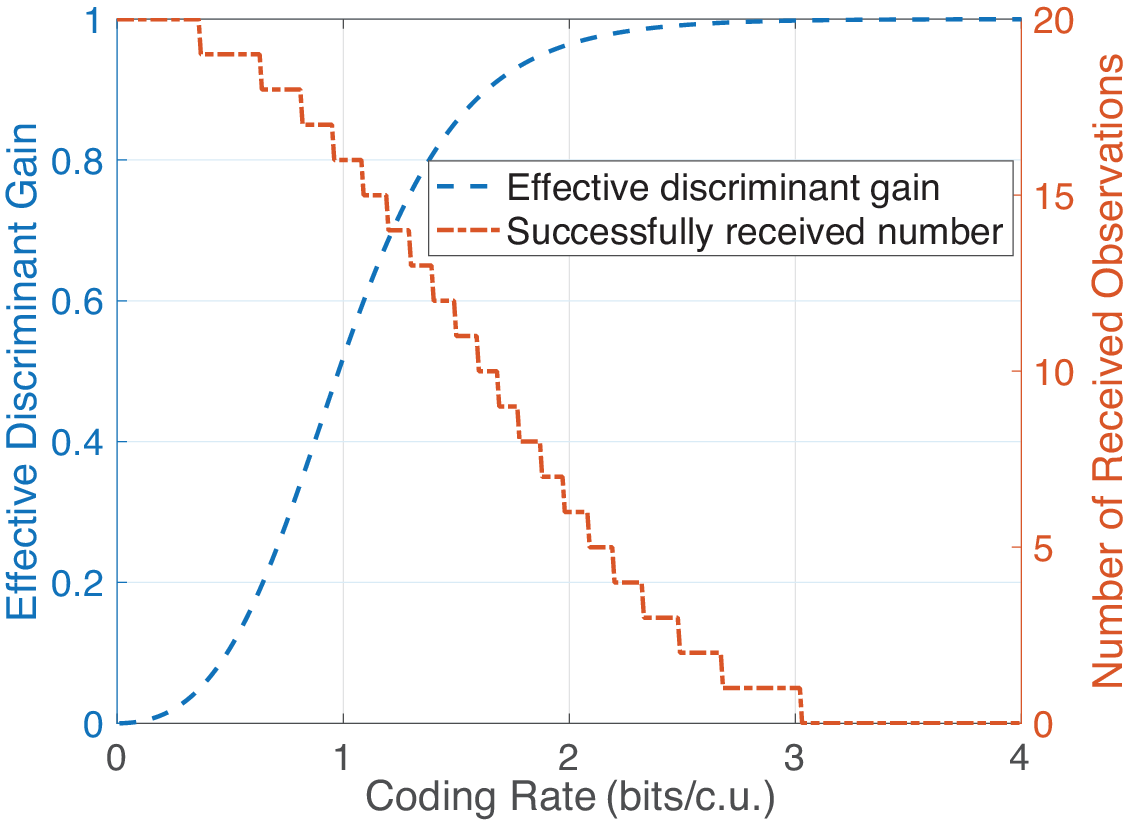}\vspace{-3mm}
    \caption{An illustration of the source-channel tradeoff. Both the effective discriminant gain and the number of successfully received observations are plotted as a function of the coding rate. The parameters are set to $L=2$, $\gamma_0=1$ dB, $N=100$, $d=50$, $D_0=1$, $U=5$, and $K=20$.\vspace{-5mm}}\label{Fig: SC tradeoff}
\end{figure}

\vspace{-3mm}
\section{Optimal Coding Rate Adaptation}\label{Sec: Tradeoff and optimization}
In this section, we firstly discuss the pivotal role of coding rate in managing the source-channel tradeoff and its impact on sensing performance. 
Then, we introduce an algorithm designed to determine the optimal coding rate, utilizing a tractable surrogate function with proven concavity.

\vspace{-3mm}
\subsection{Role of Coding Rate}\label{Sec: Effect of Source and Channel Coding}
The sensing error probability depends on two key factors: i) the effective discriminant gain, $D(\sigma_q^2)$, which is a function of the source distortion noise variance, $\sigma_q^2$, and ii) the average packet loss probability, $\overline{\varepsilon_p}(R_c)$, which depends on the coding rate, $R_c$. 
We emphasize that the variance $\sigma_q^2$ is determined by the quantization resolution, $\Delta$, as shown in \eqref{Eqn: variance of source distortion}, and it has a one-to-one relationship with the coding rate $R_c$ under a fixed latency constraint (i.e., a given number of channel uses, $N$), as derived from \eqref{Eqn: code rate} and \eqref{Eqn: variance of source distortion}.
As a result, both the effective discriminant gain and the average packet loss probability are functions of the coding rate, making the sensing error probability $P_e$ act as a function of $R_c$. This relationship is expressed as $P_e(R_c)$.
\begin{lemma}[Monotonicity of discriminant gain and packet loss probability]
As functions of the coding rate $R_c$, both the effective discriminant gain $D(R_c)$ and the average packet loss probability $\ep(R_c)$ are strictly increasing.
\end{lemma}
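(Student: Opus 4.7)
The plan is to treat the two monotonicity claims separately, handling each by a chain rule argument that reduces to elementary monotonicity of the component functions.

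For the effective discriminant gain $D(R_c)$, I would compose three maps: $R_c \mapsto R = R_c N/d$ (from \eqref{Eqn: code rate}), then $R \mapsto \sigma_q^2 = U^2/[3(2^R-1)^2]$ (from \eqref{Eqn: variance of source distortion}), and finally $\sigma_q^2 \mapsto D(\sigma_q^2) = \tfrac{1}{2}(\bmu_1-\bmu_2)^{\sT}(\bSigma+\sigma_q^2\bI)^{-1}(\bmu_1-\bmu_2)$ from \eqref{Eqn: distorted discriminant gain}. The first map is strictly increasing (linear with positive slope), and the second is strictly decreasing (since $2^R - 1$ is strictly increasing in $R$). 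For the third, I would differentiate and use the matrix identity $\tfrac{d}{d t}(\bSigma + t\bI)^{-1} = -(\bSigma + t\bI)^{-2}$, obtaining $\tfrac{dD}{d\sigma_q^2} = -\tfrac{1}{2}(\bmu_1-\bmu_2)^{\sT}(\bSigma+\sigma_q^2\bI)^{-2}(\bmu_1-\bmu_2) < 0$, where the strict inequality uses $\bmu_1 \neq \bmu_2$ (which is implicit since we are doing classification) together with the positive definiteness of $(\bSigma+\sigma_q^2\bI)^{-2}$. Composing the three monotonicities (decreasing $\circ$ decreasing $\circ$ increasing) yields that $D(R_c)$ is strictly increasing.

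For the average packet loss probability $\overline{\varepsilon_p}(R_c)$, I would work directly from the finite-blocklength expression \eqref{Eqn: expected packet loss probability} rather than from the approximation, so that the claim is exact and no approximation error needs to be addressed. For every fixed channel realization $\gamma > 0$, write the conditional packet loss as $\varepsilon_p(\gamma, R_c) = Q\!\bigl(\sqrt{N/V(\gamma)}(C(\gamma) - R_c)\bigr)$. Since $\sqrt{N/V(\gamma)} > 0$, the argument of $Q$ is a strictly decreasing affine function of $R_c$, and $Q(\cdot)$ is itself strictly decreasing on $\mR$, so $\varepsilon_p(\gamma, R_c)$ is strictly increasing in $R_c$ for every $\gamma$. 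Integrating against the Erlang density $p_\gamma$ in \eqref{Eqn: Erlang distribution}, which is a positive measure independent of $R_c$, preserves strict monotonicity, yielding that $\overline{\varepsilon_p}(R_c)$ is strictly increasing.

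I do not expect a serious obstacle in this lemma, since both claims reduce to chain-rule monotonicity of well-understood scalar functions. The only subtle points are the implicit nondegeneracy assumption $\bmu_1 \neq \bmu_2$ needed to upgrade from monotone to \emph{strictly} monotone for $D$, and the fact that monotonicity in the integrand transfers to monotonicity of the expectation only because the underlying measure does not depend on $R_c$; both are easily discharged. As a sanity check one could also verify the $\overline{\varepsilon_p}$ claim via Proposition \ref{Proposition: average packet loss}, noting that $\beta = (2^{R_c}-1)/\gamma_0$ is strictly increasing in $R_c$ and that the upper incomplete Gamma function $\Gamma(L,\beta)$ has derivative $-\beta^{L-1}e^{-\beta} < 0$, giving the same conclusion up to the $O(1/N)$ term.
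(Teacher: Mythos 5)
Your proof is correct, and it realizes exactly the argument the paper gestures at: the paper itself omits the proof, remarking only that it follows from checking $D'(R_c)>0$ and $\ep'(R_c)>0$, and your composition-of-monotonicities argument is a clean way of organizing that derivative check. Both legs of your argument are sound: the matrix-derivative identity $\tfrac{d}{dt}(\bSigma+t\bI)^{-1}=-(\bSigma+t\bI)^{-2}$ is standard, the nondegeneracy caveat $\bmu_1\neq\bmu_2$ is the right one to flag, and the observation that pointwise strict monotonicity of the integrand in \eqref{Eqn: packet loss probability} transfers to the expectation because the Erlang density in \eqref{Eqn: Erlang distribution} is a fixed positive measure is precisely the step that needs to be said. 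One small thing worth noting: you prove monotonicity of $\ep(R_c)$ from the exact finite-blocklength integral rather than from the Laplace-method approximation in Proposition~\ref{Proposition: average packet loss}, which the paper's later convexity lemmas work from. That is, if anything, slightly stronger than what the rest of the paper uses, and your closing sanity check via $\Gamma(L,\beta)$ shows the two are consistent.
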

The lemma can be proven by verifying that the derivatives $D'(R_c) > 0$ and $\ep'(R_c) > 0$, which is straightforward and hence omitted for brevity.
This lemma highlights the inherent tradeoff between source distortion and channel reliability as mediated by the coding rate $R_c$. 
A higher coding rate implies more quantization bits per feature vector. On the one hand, this leads to a more accurate representation of the feature vector, enhancing the effective discriminant gain and reducing sensing error. On the other hand, a higher coding rate increases the packet loss probability, which reduces the number of successfully received feature vectors and, in turn, increases the sensing error.
As a result, the sensing error probability exhibits a fundamental tradeoff: increasing the coding rate $R_c$ improves the accuracy of feature representation but simultaneously degrades channel reliability. 
The tradeoff is illustrated in Fig.~\ref{Fig: SC tradeoff}.
Consequently, there exists an optimal coding rate, $R_c^{\star}$, that minimizes the sensing error by balancing these opposing effects.
The problem is formulated as follows:
\begin{align}
    \min_{R_c}\quad&P_e(R_c)\label{Eqn: objective}\\
    \text{s.t.}\quad&R={R_cN}/{d}\in\mN_+,\tag{\ref{Eqn: objective}.C1}\\
    &N\in\{1,\cdots,N_{\max}\}.\tag{\ref{Eqn: objective}.C2}
\end{align}
The constraint (\ref{Eqn: objective}.C1) ensures that the number of quantization bits per feature is a positive integer, while the constraint (\ref{Eqn: objective}.C2) represents the radio resource limitation, defined by the restricted number of available channel uses.

\vspace{-3mm}
\subsection{Coding Rate Optimization}
Problem \eqref{Eqn: objective} is a non-convex problem, which makes finding the optimum challenging. 
To address this, we relax the integer constraint (\ref{Eqn: objective}.C1), allowing the coding rate $R_c$ to take positive real values.
Then, a sufficient condition for the optimal coding rate in this case is given as follows:
\begin{align}
    R_c^{\star}\!=\!\arg\!\min_{R_c>0}\exp\!\l(\!-\frac{D(R_c)}{4}\!\r)\!+\!\l(\!1\!-\!\exp\!\l(\!-\frac{D(R_c)}{4}\!\r)\!\r)\!\overline{\varepsilon_p}(R_c).\label{Eqn: optimal code rate}
\end{align}
However, this problem remains non-convex, and deriving a closed-form expression for the optimal $R_c^{\star}$ is still infeasible.
\subsubsection{Surrogate function}
To tackle the challenge, we introduce a surrogate function to reformulate the original problem. 
Specifically, problem \eqref{Eqn: optimal code rate} is re-expressed as the following equivalent problem:
\begin{equation}
    R_c^{\star}=\arg\max_{R_c>0}\phi(R_c),\label{Eqn: equivalent problem}
\end{equation}
where the surrogate function $\phi(\cdot)$ is defined as
\begin{equation}
    \phi(R_c)=\ln\l[\l(1-\exp\l(-\frac{D(R_c)}{4}\r)\r)(1-\overline{\varepsilon_{p}}(R_c))\r].\label{Eqn: surrogate function}
\end{equation}

To analyze the convexity or concavity of the surrogate function in \eqref{Eqn: surrogate function}, we investigate the properties of the effective discriminant gain function $D(R_c)$ and the average packet loss function $\ep(R_c)$ as follows:

i) For effective discriminant gain $D(R_c)$, we detail the concavity of this function as described below.
\begin{lemma}[Concavity of effective discriminant gain]\label{Lemma: convexity of DG}
    The effective discriminant gain $D(R_c)$, as defined in \eqref{Eqn: distorted discriminant gain}, is a concave function of the coding rate $R_c$.
\end{lemma}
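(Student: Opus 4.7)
The strategy is to diagonalize the problem using the spectral decomposition of $\bSigma$, reduce concavity of $D(R_c)$ to concavity of a sum of scalar eigencomponents, and then verify concavity of each component by a direct second-derivative analysis via the chain rule.

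First I would eigendecompose $\bSigma = \bU\boldsymbol{\Lambda}\bU^{\sT}$, where $\boldsymbol{\Lambda} = \diag(\lambda_1,\ldots,\lambda_d)$ with $\lambda_i > 0$. Setting $\bv = \bU^{\sT}(\bmu_1 - \bmu_2)$ with components $v_i$, and using $(\bSigma + \sigma_q^2 \bI)^{-1} = \bU(\boldsymbol{\Lambda} + \sigma_q^2 \bI)^{-1} \bU^{\sT}$, equation~\eqref{Eqn: distorted discriminant gain} rewrites as
\begin{equation*}
D(R_c) \;=\; \frac{1}{2}\sum_{i=1}^{d}\frac{v_i^2}{\lambda_i + \sigma_q^2(R_c)}.
\end{equation*}
Since a non-negative combination of concave functions is concave, it suffices to show that each term $\psi_i(R_c) := (v_i^2/2)/(\lambda_i + \sigma_q^2(R_c))$ is concave in $R_c$.

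Next I would apply the chain rule. Writing $\psi_i = f_i\circ \sigma_q^2$ with $f_i(t) = (v_i^2/2)/(\lambda_i+t)$, one checks that $f_i'(t) = -(v_i^2/2)/(\lambda_i+t)^2 < 0$ and $f_i''(t) = v_i^2/(\lambda_i+t)^3 > 0$, so $f_i$ is convex and strictly decreasing. Differentiating twice yields
\begin{equation*}
\psi_i''(R_c) \;=\; f_i''\bigl(\sigma_q^2\bigr)\,\bigl(\sigma_q^{2\prime}\bigr)^2 \;+\; f_i'\bigl(\sigma_q^2\bigr)\,\sigma_q^{2\prime\prime},
\end{equation*}
a sum of a positive term (from $f_i''>0$) and a negative term (from $f_i'<0$ and $\sigma_q^{2\prime\prime}>0$, as one verifies directly from the closed form). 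Invoking Lemma~\ref{Lemma: source distortion}, $\sigma_q^2(R_c) = U^2/[3(2^{R_cN/d}-1)^2]$, whose first and second derivatives in $R_c$ can be computed in closed form; the goal is to show that the negative term dominates so that $\psi_i''(R_c)\le 0$.

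The main obstacle will be the final algebraic inequality. After plugging in the explicit derivatives, the non-positivity $\psi_i''(R_c)\le 0$ becomes equivalent to the termwise bound
\begin{equation*}
\lambda_i + \sigma_q^2(R_c) \;\ge\; \frac{2\bigl(\sigma_q^{2\prime}(R_c)\bigr)^2}{\sigma_q^{2\prime\prime}(R_c)},
\end{equation*}
whose verification hinges on the exponential structure of $\sigma_q^2(R_c)$ and strict positivity of the eigenvalues $\lambda_i$. Establishing this inequality termwise and summing over $i$ then gives $D''(R_c)\le 0$, completing the proof. This termwise bound is the sole nontrivial step; everything upstream is routine diagonalization and chain-rule bookkeeping.
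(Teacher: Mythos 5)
Your reduction to termwise concavity via eigendecomposition mirrors the paper's first step, but your chain-rule analysis is more careful than the paper's and in fact contradicts it: you correctly observe that $\sigma_q^{2\prime\prime}(R_c)>0$ (so $\sigma_q^2(R_c)$ is convex in $R_c$), whereas the paper asserts that $\sigma_q^2(R_c)$ is concave because ``$z$ is convex and $U^2/(3z^2)$ is decreasing'' --- an invalid composition rule, since $t\mapsto U^2/(3t^2)$ is convex, not concave. The paper's subsequent step, inferring concavity of $\nu_i^2/(\lambda_i+\sigma_q^2)$ from concavity of $\sigma_q^2$ (outer map convex and decreasing), is also a misstated composition rule. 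So you are right to insist on the explicit second-derivative test rather than appealing to those rules.

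The gap, however, cannot be closed, because the termwise bound you flag as the ``sole nontrivial step'' is false. Write $u=2^{NR_c/d}$, $z=u-1$, $a=N\ln 2/d$, so $\sigma_q^2=U^2/(3z^2)$, $\sigma_q^{2\prime}=-2aU^2u/(3z^3)$, and $\sigma_q^{2\prime\prime}=2a^2U^2u(2u+1)/(3z^4)$. Then
\begin{equation*}
\frac{2\bigl(\sigma_q^{2\prime}\bigr)^2}{\sigma_q^{2\prime\prime}}=\frac{4U^2u}{3(2u+1)z^2},
\end{equation*}
and your required bound $\lambda_i+\sigma_q^2\ge 2(\sigma_q^{2\prime})^2/\sigma_q^{2\prime\prime}$ is equivalent to $\lambda_i\ge U^2(2u-1)/\bigl[3(u-1)^2(2u+1)\bigr]$. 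As $R_c\to0^+$ (so $u\to1^+$), the right side diverges like $z^{-2}$ while $\lambda_i$ is fixed; more precisely $\lambda_i+\sigma_q^2\sim U^2/(3z^2)$ while the quotient is $\sim 4U^2/(9z^2)$, and $1/3<4/9$, so the inequality fails near $R_c=0$. It even fails with the paper's own experimental settings: for $\lambda_i=1$, $U=5$, $d=50$, $N=100$, at $R_c=1/2$ (one bit per feature, $u=2$, $z=1$) the requirement reads $1\ge 5$. Hence $D(R_c)$ is convex, not concave, at small coding rates, and the lemma as stated on all of $R_c>0$ is actually false --- concavity holds only once $u$ is large enough that $U^2(2u-1)/[3(u-1)^2(2u+1)]$ drops below $\min_i\lambda_i$. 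Your plan is structurally sound and correctly isolates the step where the argument must break, but as written it would fail there, and so does the paper's own proof.
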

\begin{proof}
    (See Appendix~\ref{Proof: convexity of DG}).
\end{proof}

ii) For average packet loss probability $\ep(R_c)$, it can be shown that this function is generally non-convex. 
However, we observe a useful property of the average packet success probability, defined as $(1-\ep(R_c))$, which is described below.
\begin{lemma}[Log-concavity of average packet success probability]\label{Lemma: convexity of epsilon}
    The average packet success probability $(1-\overline{\varepsilon_p}(R_c))$, with $\overline{\varepsilon_p}(R_c)$ given in \eqref{Eqn: average packet loss probability}, is log-concave with respect to the coding rate $R_c$.
\end{lemma}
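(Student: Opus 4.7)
\proof (Proposal.) The plan is to ignore the $1+O(1/N)$ correction from Proposition~\ref{Proposition: average packet loss} and work with the leading-order expression, so that it suffices to establish log-concavity in $R_c$ of the \emph{success probability}
\begin{equation*}
F_{\sf suc}(R_c) \;:=\; 1-\overline{\varepsilon_p}(R_c) \;=\; \frac{1}{\Gamma(L)}\,\Gamma\!\left(L,\;\tfrac{1}{\gamma_0}(2^{R_c}-1)\right).
\end{equation*}
The key observation is that $F_{\sf suc}$ admits a probabilistic interpretation: if $\gamma\sim{\sf Erlang}(L,1/\gamma_0)$, the receive SNR, then $F_{\sf suc}(R_c)=\Pr(\gamma>2^{R_c}-1)=\Pr(X>R_c)$, where $X:=\log_2(1+\gamma)$ is the corresponding Shannon-capacity random variable. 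Hence the lemma is equivalent to showing that the survival function of $X$ is log-concave on $(0,\infty)$.

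To obtain this, I would invoke the classical preservation theorem (Pr\'ekopa 1973; Bagnoli--Bergstrom 2005): if a probability density on the real line is log-concave, then both its cumulative distribution function and its survival function are log-concave. Thus the task reduces to showing that the density $p_X$ of $X$ is log-concave. By a standard change of variables from \eqref{Eqn: Erlang distribution},
\begin{equation*}
p_X(x) \;=\; \frac{(\ln 2)\,2^x\,\bigl((2^x-1)/\gamma_0\bigr)^{L-1}\exp\!\bigl(-(2^x-1)/\gamma_0\bigr)}{\gamma_0\,\Gamma(L)},\quad x>0.
\end{equation*}
I would then decompose $\ln p_X(x)$ into the three non-constant pieces $(L-1)\ln(2^x-1)$, $-(2^x-1)/\gamma_0$, and $x\ln 2$, and check concavity term by term. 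Differentiating twice, the exponential piece contributes $-(\ln 2)^2\,2^x/\gamma_0<0$; the logarithmic piece contributes $-(L-1)(\ln 2)^2\,2^x/(2^x-1)^2\le 0$ for $L\ge 1$; and the linear piece contributes zero. Summing yields $(\ln p_X)''(x)\le 0$ for all $x>0$, so $p_X$ is log-concave, and the Pr\'ekopa--Bagnoli--Bergstrom theorem delivers the conclusion.

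The main obstacle is not the second-derivative bookkeeping, which is routine once the decomposition is written down, but rather setting up the reduction cleanly. In particular, one must justify dropping the $\{1+O(1/N)\}$ factor from \eqref{Eqn: average packet loss probability}: the proof is carried out on the closed-form leading term, and the correction is a uniformly small multiplicative perturbation that does not affect log-concavity in the short-packet regime of interest. A secondary, minor subtlety is that $p_X$ is supported on $(0,\infty)$ with a boundary at $x=0$ where $2^x-1\to 0$; log-concavity is only needed in the interior, and the survival-function preservation theorem applies verbatim once $p_X$ is extended by zero.
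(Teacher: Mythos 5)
Your proof is correct, and it takes a genuinely different route from the paper. The paper argues directly on the survival function: it writes $\zeta(R_c)=\frac{1-\ep(R_c)}{\ep'(R_c)}$, re-expresses it as the integral $\zeta(R_c)=\frac{\gamma_0}{2^{R_c}\ln 2}\int_0^\infty\bigl(1+\frac{\gamma_0 x}{2^{R_c}-1}\bigr)^{L-1}e^{-x}\,\d x$, observes that every factor is decreasing in $R_c$, and concludes that $\frac{\d}{\d R_c}\ln(1-\ep(R_c))=-1/\zeta(R_c)$ is decreasing---i.e.\ it establishes the increasing-hazard-rate property, which is equivalent to log-concavity of the survival function. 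You instead identify $1-\ep(R_c)$ as the survival function of the capacity random variable $X=\log_2(1+\gamma)$ with $\gamma\sim{\sf Erlang}(L,1/\gamma_0)$, compute $p_X$ by change of variables, verify $(\ln p_X)''\le 0$ term-by-term, and invoke the Pr\'ekopa/Bagnoli--Bergstrom preservation theorem (log-concave density $\Rightarrow$ log-concave survival function). Both treat the same leading-order closed form and correctly set aside the $\{1+O(1/N)\}$ factor. Your route proves the strictly stronger statement that the \emph{density} of $X$ is log-concave and then inherits the result for the tail, which is conceptually cleaner and generalizes immediately (e.g.\ to any log-concave fading SNR distribution pushed through a monotone concave capacity map); the paper's route is more elementary and self-contained, requiring no external theorem. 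One cosmetic note: the extension-by-zero aside is unnecessary---the Bagnoli--Bergstrom theorem is stated for densities supported on an arbitrary interval, so it applies to $p_X$ on $(0,\infty)$ as is.
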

\begin{proof}
    (See Appendix~\ref{Proof: convexity of epsilon}).
\end{proof}

To facilitate our analysis, we define the two functions:
\begin{align}
    \phi_1(R_c)&=\ln\l(1-\exp\l(-\frac{1}{4}D(R_c)\r)\r),\\
    \phi_2(R_c)&=\ln\l(1-\overline{\varepsilon_p}(R_c)\r),
\end{align}
such that the surrogate function in \eqref{Eqn: equivalent problem} can be denoted as $\phi(R_c)=\phi_1(R_c)+\phi_2(R_c)$.
Based on the Lemmas \ref{Lemma: convexity of DG} and \ref{Lemma: convexity of epsilon}, the concavity of the surrogate function is established in the following proposition.
\begin{proposition}[Concavity of surrogate function]\label{Proposition: convexity}
The surrogate function $\phi(R_c)$ defined in \eqref{Eqn: surrogate function} is concave.
\end{proposition}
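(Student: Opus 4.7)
The plan is to decompose the surrogate function as $\phi(R_c)=\phi_1(R_c)+\phi_2(R_c)$ and establish concavity of each summand separately, after which concavity of $\phi$ follows because sums of concave functions are concave. The term $\phi_2$ is immediate: by Lemma~\ref{Lemma: convexity of epsilon} the average packet success probability $1-\ep(R_c)$ is log-concave in $R_c$, and $\phi_2(R_c)=\ln(1-\ep(R_c))$ is exactly its logarithm, hence concave by definition of log-concavity.

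The real work is in $\phi_1$. I would introduce the single-variable outer function
\begin{equation}
    g(x)=\ln\!\l(1-\exp\l(-\tfrac{x}{4}\r)\r),\qquad x>0,
\end{equation}
so that $\phi_1(R_c)=g(D(R_c))$, and then invoke the standard composition rule: if $g$ is concave and non-decreasing on its domain and $D(R_c)$ is concave (which is exactly Lemma~\ref{Lemma: convexity of DG}), then $g\circ D$ is concave. Verifying the two hypotheses on $g$ is a direct calculus exercise. A short computation yields
\begin{equation}
    g'(x)=\frac{1}{4}\cdot\frac{1}{e^{x/4}-1}>0,\qquad g''(x)=-\frac{1}{16}\cdot\frac{e^{x/4}}{(e^{x/4}-1)^2}<0,
\end{equation}
for all $x>0$, which gives strict monotonicity and strict concavity of $g$ on the relevant domain (the effective discriminant gain $D(R_c)$ is positive whenever $R_c>0$).

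Combining the two pieces, $\phi=\phi_1+\phi_2$ is the sum of two concave functions and is therefore concave on $R_c>0$, completing the argument. The main obstacle I anticipate is making sure the composition rule is applied rigorously: one needs $g$ to be non-decreasing (not merely concave) for the composition with a concave inner function to preserve concavity, and the range of $D(R_c)$ must lie in the domain of $g$, i.e., $D(R_c)>0$. Both points are handled by the sign checks above together with the fact (noted after Theorem~\ref{Thm: discriminant gain reduction}) that $D(R_c)\le D_0$ with $D(R_c)>0$ for any positive coding rate. No additional estimates on $\ep$ beyond Lemma~\ref{Lemma: convexity of epsilon} are needed, which keeps the proof short.
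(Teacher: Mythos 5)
Your proof is correct and follows essentially the same strategy as the paper: decompose $\phi=\phi_1+\phi_2$, handle $\phi_2$ directly from Lemma~\ref{Lemma: convexity of epsilon}, and obtain concavity of $\phi_1$ from Lemma~\ref{Lemma: convexity of DG} via a composition rule. The only distinction is the grouping of the composition for $\phi_1$: the paper peels it into two steps, writing $\phi_1=f\circ g$ with $f(x)=\ln(1-x)$ (concave, non-increasing) and $g(R_c)=\exp(-D(R_c)/4)$ (convex, since a convex non-increasing outer composed with a concave inner is convex), and then applies the rule that a concave non-increasing function of a convex function is concave; you instead fold the two outer maps into a single function $g(x)=\ln(1-e^{-x/4})$, verify by direct differentiation that it is concave and non-decreasing on $x>0$, and apply the rule that a concave non-decreasing function of a concave function is concave. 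Both routes are standard and rigorous; the paper's version avoids the explicit second-derivative computation, while yours is more self-contained since it does not require the intermediate observation that $\exp(-D(R_c)/4)$ is convex. Your remark about confirming $D(R_c)>0$ so that the argument of $g$ stays in its domain is a good detail to make explicit, and it indeed follows since the covariance matrix is positive definite so $D(R_c)$ is a strictly positive quadratic form.
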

\begin{proof}
    Considering $\phi(R_c)=\phi_1(R_c)+\phi_2(R_c)$, we prove the convexity of functions $\phi_1(R_c)$ and $\phi_2(R_c)$ as follows:
\begin{itemize}
    \item[i)] It is easy to verify that the function $f(x)=\ln(1-x)$, defined at $x\in(0,1)$, is concave and non-increasing. From Lemma~\ref{Lemma: convexity of DG}, we have $D(R_c)$ is concave such that the function $g(R_c)=\exp\l(-\frac{1}{4}D(R_c)\r)$ is convex. Therefore, the composition function $\phi_1(R_c)=f(g(R_c))$ is concave.
    \item[ii)] From Lemma~\ref{Lemma: convexity of epsilon}, we have $(1-\overline{\varepsilon_p}(R_c))$ is log-concave such that  $\phi_2(R_c)=\log(1-\overline{\varepsilon_p}(R_c))$ is a concave function.
\end{itemize}
This completes the proof.
\end{proof}
\begin{figure}[t!]
    \centering
    \includegraphics[width=0.58\columnwidth]{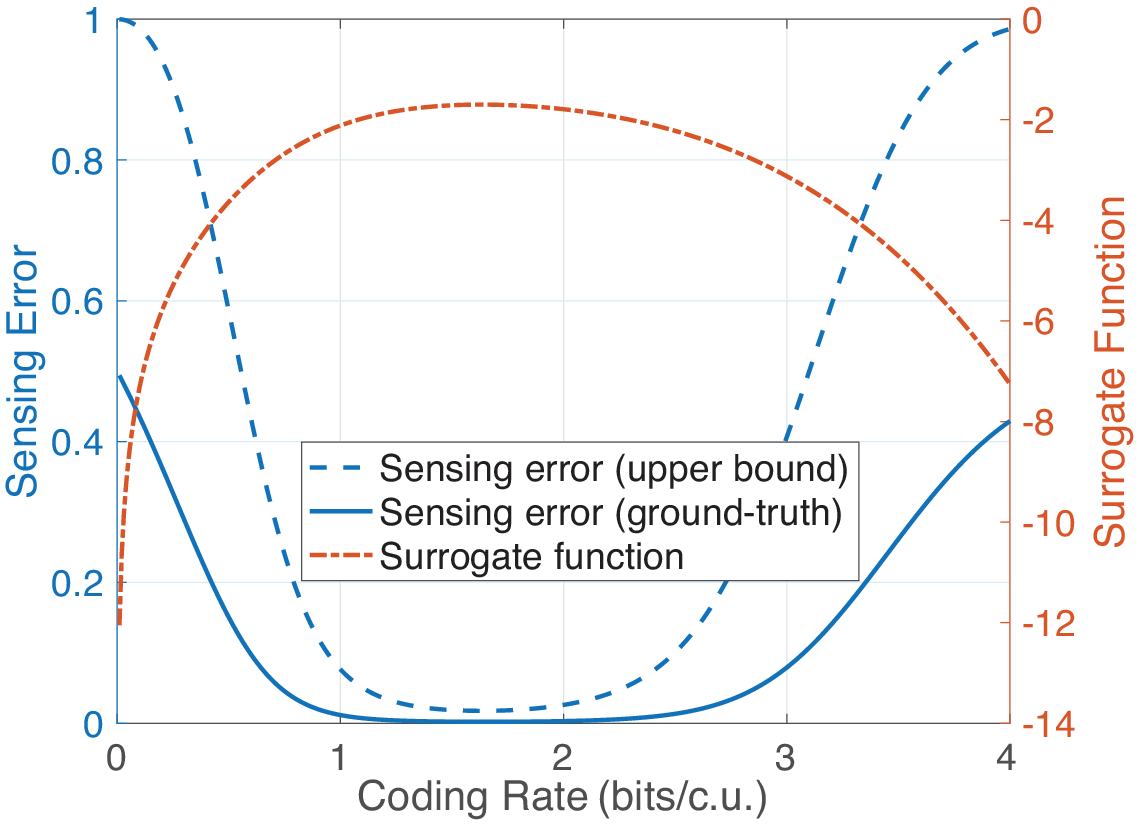}\vspace{-3mm}
    \caption{Comparison of the upper bound of sensing error, ground-truth sensing error, and the proposed surrogate function as functions of coding rate. The parameters are set to $L=4$, $\gamma_0=1$ dB, $N=100$, $d=50$, $D_0=1$, $U=5$, and $K=20$.\vspace{-5mm}}\label{Fig: Effect of code rate}
\end{figure}

Fig.~\ref{Fig: Effect of code rate} illustrates the comparison between the ground-truth sensing error, the upper bound of the sensing error, and the surrogate function across different coding rates. 
The plot demonstrates that the optimal coding rate is achieved at the same point for all three metrics, thereby validating the accuracy of the surrogate function. 
Furthermore, the concave nature of the surrogate function is clearly depicted.

\subsubsection{Approximate gradient ascent}
The derivative of $\phi_1(R_c)$ can be obtained in closed form as
\begin{equation}
    \phi_1'(R_c)=\frac{D'(R_c)}{4\l(\exp\l(\frac{1}{4}D(R_c)\r)-1\r)},\label{Eqn: Expression of phi1 prime}
\end{equation}
where $D(R_c)$ is given in \eqref{Eqn: discriminant gain}, and the expression of $D'(R_c)$ is provided as follows:
\begin{equation}
    D'(R_c)\!=\!\frac{NU^22^{R(R_c)}\!\ln2}{3(2^{R(R_c)}\!-\!1)^3d}\|(\bSigma+\sigma_q^2(R_c)\bI)^{-1}\!(\bmu_1\!-\!\bmu_2)\|_2^2,\!\!\label{Eqn: derivative of DG}
\end{equation}
with $R(R_c)=\frac{NR_c}{d}$ and $\sigma_q^2(R_c)=\frac{U^2}{3(2^{NR_c/d}-1)^2}$. 

The derivative of $\phi_2(R_c)$ involves $\overline{\varepsilon_p}(R_c)$ and $\overline{\varepsilon_p}'(R_c)$, which appears as an integral, making the numerical calculation difficult. 
To address this, we use the approximation as follows:
\begin{equation}
    \phi_2'(R_c)\doteq-\frac{2^{R_c}(2^{R_c}-1)^{L-1}\ln2}{\gamma_0^L\Gamma\l(L,\frac{1}{\gamma_0}(2^{R_c}-1)\r)}\exp\l(-\frac{1}{\gamma_0}(2^{R_c}-1)\r).\vspace{-2mm}\label{Eqn: Approximation of phi2 prime}
\end{equation}
The relative error of this approximation is $O(1/N)$.

Thus, the gradient of the surrogate function is estimated as 
\begin{equation}
    \phi'(R_c)=\phi_1'(R_c)+\phi_2'(R_c),
\end{equation}
where $\phi_1'(R_c)$ and $\phi_2'(R_c)$ are presented in closed-form by \eqref{Eqn: Expression of phi1 prime} and \eqref{Eqn: Approximation of phi2 prime}, respectively.

Using the gradient ascent method, the updating rule is $R_c\gets R_c+\eta\phi'(R_c)$, where $\eta$ is the step size and $\phi'(R_c)$ is the estimated gradient in current step. Finally, we can numerically determine the optimal coding rate $R_c^{\star}$. 
However, due to the approximate nature of the gradient, an error is introduced, resulting in a deviation from the true optimal solution. 
This deviation is on the order of $O(1/N)$. 
Given typical block lengths such as $N=100$, this deviation can be maintained within acceptable tolerance levels.
As illustrated in Fig.~\ref{Fig: Convergence analysis},  upon convergence, the deviation of the coding rate derived from the approximate gradient ascent, compared to the coding rate derived using the ground-truth gradient, is only 0.26\%.

Further considering the integer constraint on $R$ from (\ref{Eqn: objective}.C1) and the maximum channel uses from (\ref{Eqn: objective}.C2), the number of quantization bits per feature is determined as $R^{\star}=\max\{\lfloor R_cN_{\max}/d\rceil,1\}$, where $\lfloor \cdot \rceil$ denotes rounding $R^{\star}$ to the nearest integer. 
This ensures that $R^{\star}$ satisfies the practical requirement of being a positive integer.
Finally, the coding rate is adapted to be $R_c^{\star}=R^{\star}d/N_{\max}$, which effectively balances tradeoff between source distortion and channel reliability.
\begin{figure}[t!]
    \centering
    \includegraphics[width=0.52\columnwidth]{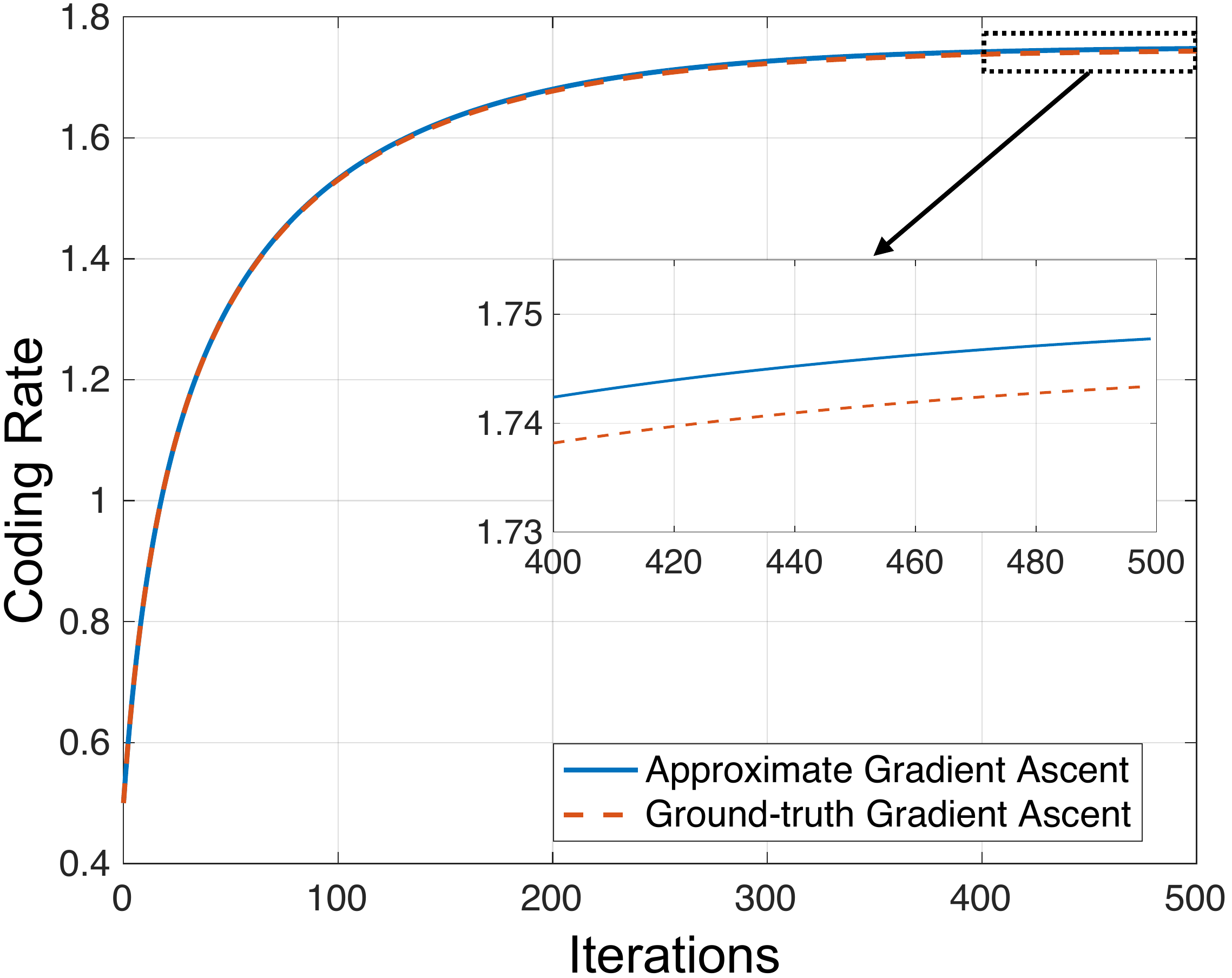}\vspace{-2mm}
    \caption{Convergence performance of the proposed approximate gradient ascent method. The parameters are set to $\eta=0.01$, $L=4$, $\gamma_0=2$ dB, $N=100$, $d=50$, $D_0=1$, $U=5$, and $K=20$.\vspace{-4mm}}\label{Fig: Convergence analysis}
\end{figure}
\vspace{-3mm}
\section{Experimental Results}\label{Sec: Experimental Results}

\subsection{Experimental Setup}
Unless specified otherwise, the default experimental settings are specified as follows.
\subsubsection{System and communication settings}
We consider a frequency non-selective Rayleigh fading channel, where the channel vector consists of i.i.d. complex Gaussian elements following $\cC\cN(0,1)$. The coherence duration, defined as the time over which the channel remains constant, spans $N=100$ symbols, which is set to transmit one feature vector per time slot. The transmit SNR is set as $\gamma_0=2$ dB.
In the system, a single-antenna sensor sequentially observes the object for $K=10$ times, and transmits the feature vectors to the edge server. The edge server is equipped with $L=4$ antennas for receiving over 10 consecutive time slots, but it successfully decodes $M\leq K$ feature vectors due to packet loss.


\subsubsection{Sensing and classification settings}
We consider both the cases of statistical inference on synthetic data and CNN-based classification on real-world data as follows.
\begin{itemize}
    \item \emph{Statistical inference on synthetic data:} In this setting, local feature vectors are generated from a Gaussian mixture model and transmitted to the classifier via short-packet communication. The feature vectors have a dimensionality of $d = 50$.
    The centroid of one cluster is a vector with all elements equal to $+0.1$, while the centroid of the other cluster is a vector with all elements equal to $-0.1$. 
    The covariance matrix is defined as $\bSigma=\bI_d$.
    The results for each sensing error are generated from 10,000 Monte Carlo experiments.
    \item \emph{Non-linear CNN-based classification on real-world data:} This setting uses the ModelNet-40 dataset~\cite{su2015multi}, which contains 40-class multi-view images of 3D objects, and the VGG-11 convolutional neural network~\cite{simonyan2015very}. The VGG-11 model is split into two components: the feature extractor which runs on the sensor, and the classifier which runs on the edge server. To reduce communication overhead, each ModelNet image is resized from its original size of $3 \times 224 \times 224$ pixels to $3 \times 56 \times 56$ pixels before being processed by the on-sensor feature extractor, which outputs a $512\times1\times1$ tensor. This tensor is then further compressed by a fully-connected layer to obtain an output feature vector of dimension $d=50$.
\end{itemize}
\subsubsection{Benchmarks}
We evaluate the proposed method against the following four benchmarks:
\begin{itemize}
    \item \emph{Brute-force search:} The optimal coding rate is determined through exhaustive search over all feasible quantization levels. This approach guarantees the minimum sensing error by identifying the globally optimal solution.
    \item \emph{URLLC:} The coding rate is selected as the highest rate that satisfies a decoding error probability threshold of $10^{-5}$, which is a standard requirement in URLLC~\cite{durisi2016toward}. Accordingly, the number of quantization bits per feature is determined by the coding rate.
    \item \emph{Full resolution:} A resolution of 32 bits is used as the baseline for full-resolution features. This choice reflects the common practice in CNNs, where feature maps are typically represented as tensors of floating-point numbers, with the float-32 data type being the most widely used.
    \item \emph{Half-bits resolution:} A resolution of 16 bits, i.e., a half of full-resolution bits, is used for representing each feature.
\end{itemize}

\begin{figure}[t!]
    \centering
    \subfigure[Effect of observation number]{\includegraphics[width=0.48\columnwidth]{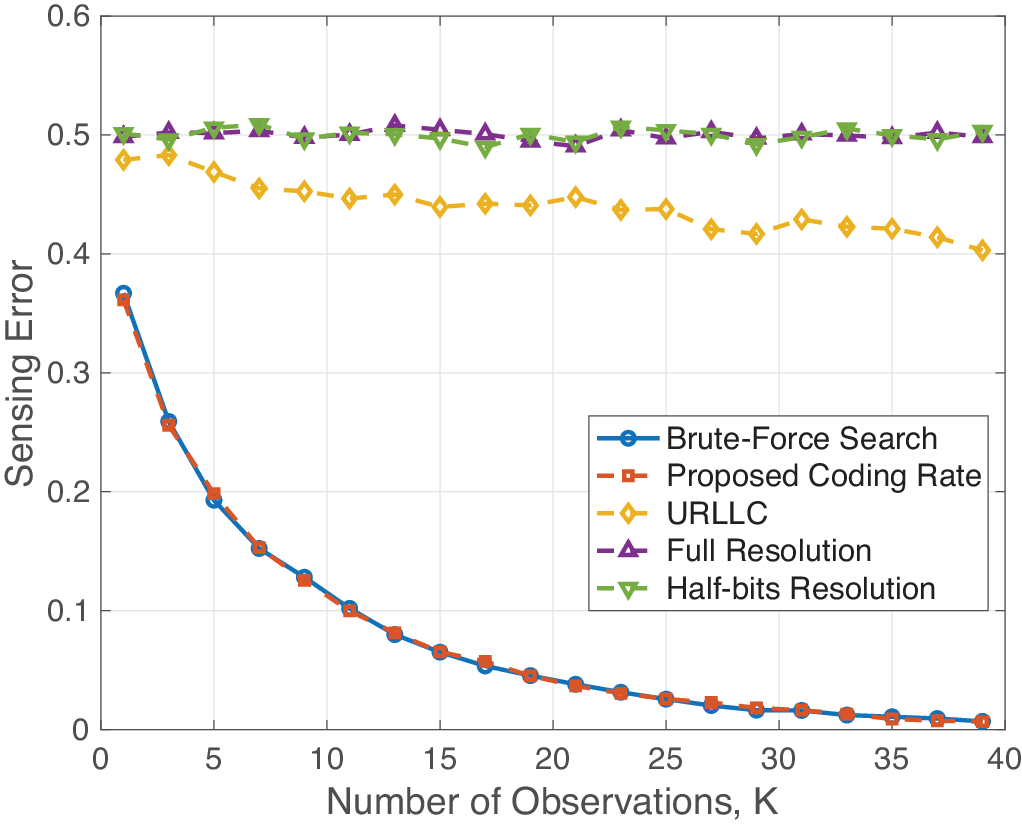}
    \label{Simulation: Effect of observation number}}
    \subfigure[Effect of transmit SNR]{\includegraphics[width=0.48\columnwidth]{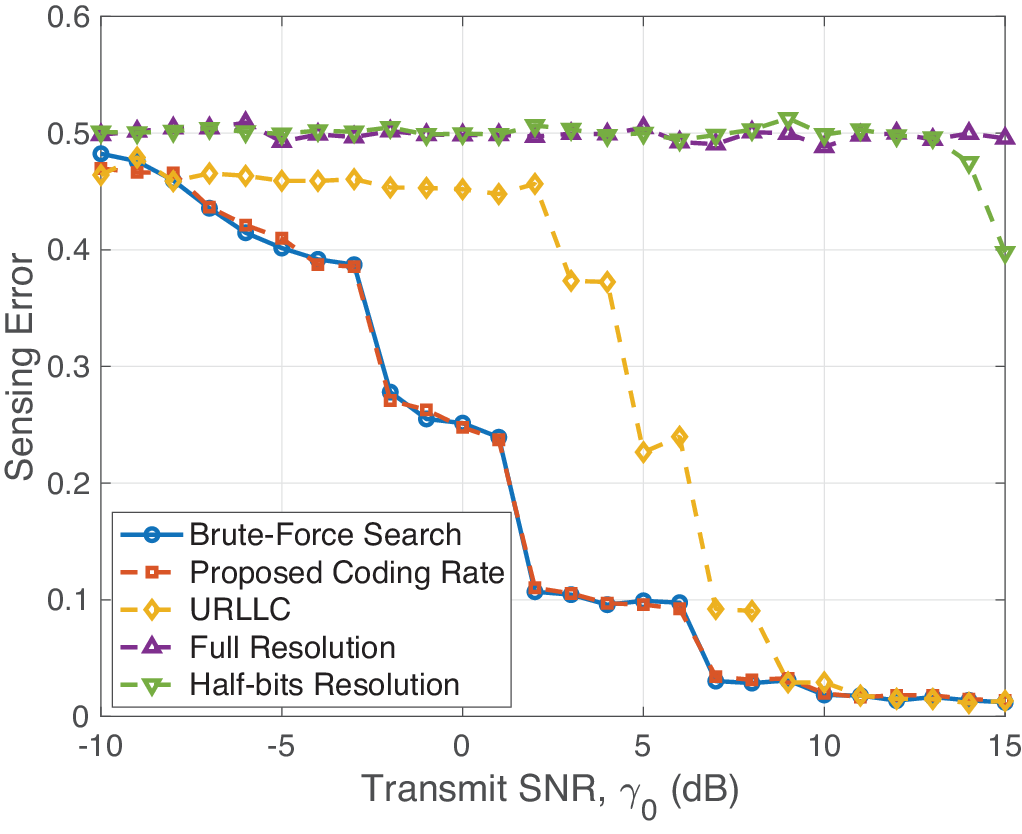}
    \label{Simulation: Effect of transmit SNR}}
    \subfigure[Effect of receive antennas]{\includegraphics[width=0.48\columnwidth]{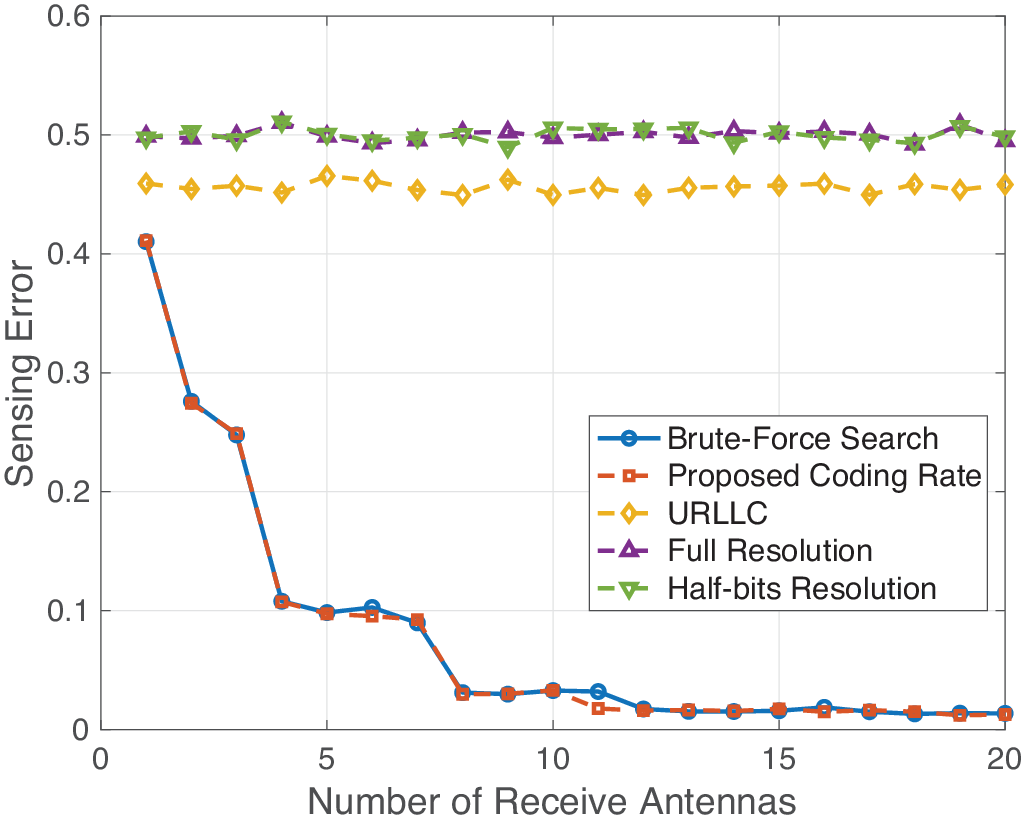}
    \label{Simulation: Effect of receive antennas}}
    \subfigure[Effect of blocklength]{\includegraphics[width=0.48\columnwidth]{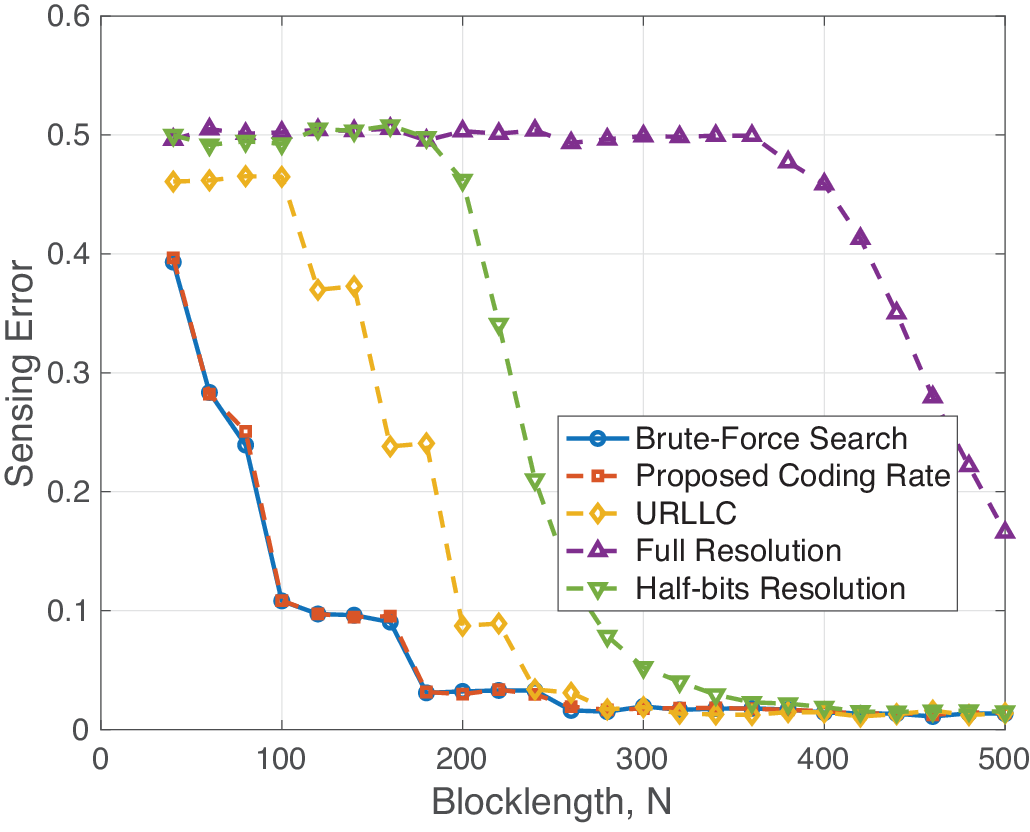}
    \label{Simulation: Effect of blocklength}}
    \caption{The effects of different parameters on sensing error and comparison with benchmarks in the case of statistical inference.\vspace{-5mm}}\label{Simulation: Linear classification}
\end{figure}

\vspace{-4mm}
\subsection{Statistical Inference Case}
The impacts of various parameters on E2E sensing error are analyzed in Fig.~\ref{Simulation: Linear classification}. 
First, as shown in Fig.~\ref{Simulation: Effect of observation number}, the sensing error probability for a single observation ($K = 1$) is close to random guessing due to the overlapping of sample clusters, underscoring the importance of multiple observations. As the number of observations, $K$, increases, the sensing error diminishes exponentially, which aligns with the theoretical results in \eqref{Eqn: error of sequential sensing}. However, this reduction comes at the cost of a linear increase in latency, as more time slots are required for feature vector transmission. 
Similarly, the sensing error decreases with higher transmit SNR, $\gamma_0$, as shown in Fig.~\ref{Simulation: Effect of transmit SNR}, which is consistent with the analytical findings. This improvement, however, comes with increased power consumption at the sensor. 
The impact of the number of receive antennas, $L$, as shown in Fig.~\ref{Simulation: Effect of receive antennas}, is also evident: as $L$ grows, channel diversity improves, reducing the packet loss probability and increasing the number of successfully decoded feature vectors, thereby lowering the sensing error. 
Finally, the blocklength, $N$, significantly affects performance as shown in Fig.~\ref{Simulation: Effect of blocklength}. Larger $N$ allows for a higher quantization bits given coding rate, reducing source distortion, and also decreases the packet loss probability. From our analysis, the dominant factor is the reduction in source distortion. However, the gains from a larger blocklength come at the cost of increased latency, which scales proportionally with the number of channel uses.

Our proposed adaptive coding rate design demonstrates effectiveness by achieving performance close to brute-force search while outperforming other benchmarks. 
Traditional reliability-centric URLLC techniques employ low coding rates to ensure high transmission reliability, but this approach severely limits the number of quantization bits per feature, causing significant source distortion and thus high sensing error probabilities. 
Conversely, high-resolution features require a high coding rate for transmission, which dramatically increases the packet loss probability. 
In such scenarios, sensing performance degrades to random guessing due to the absence of successfully received feature vectors. 
By balancing source distortion and channel reliability, our adaptive coding rate design avoids these extremes, thereby achieving superior E2E sensing performance.
Notably, the performance of other benchmark approaches begins to converge with that of our proposed scheme when the transmit SNR is high and the blocklength is long. 
This is because high transmit SNR and extended blocklength allow for the transmission of more bits (i.e., a feature vector with high resolution) with a low packet loss probability (i.e., ultra-reliable transmission).
Therefore, we conclude that our scheme is particularly advantageous in scenarios characterized by low SNR and short packet lengths.
\begin{figure}[t!]
    \centering
    \subfigure[Effect of observation number]{\includegraphics[width=0.48\columnwidth]{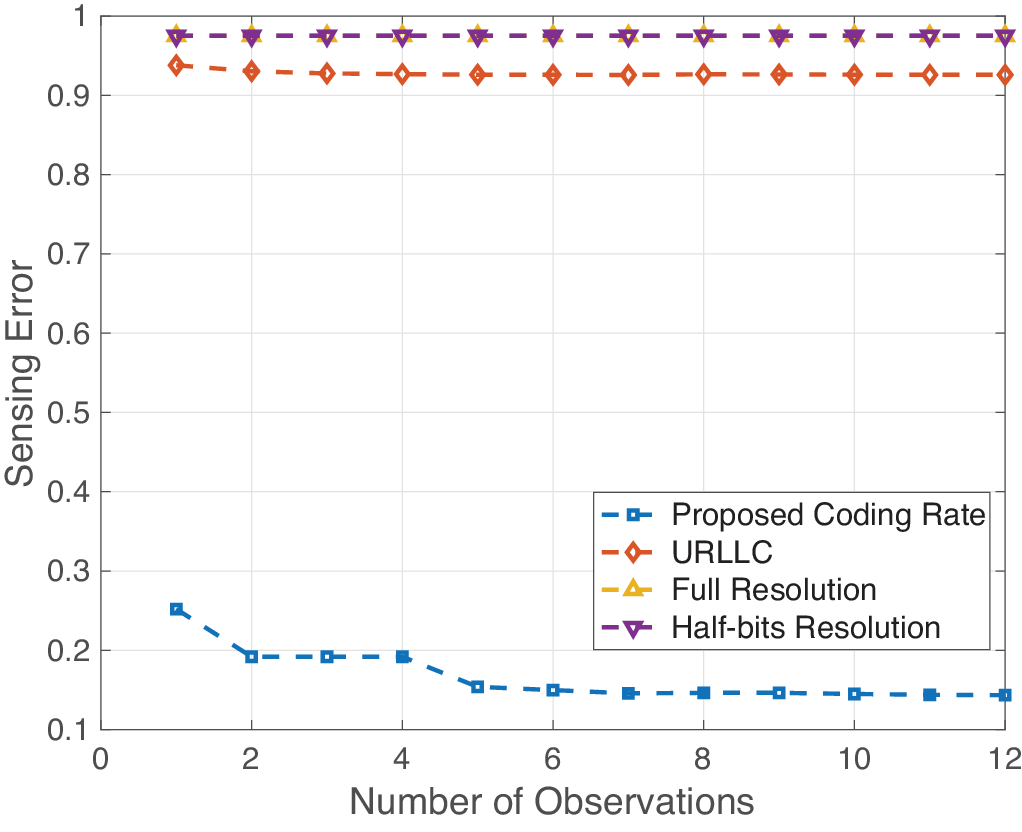}
    \label{CNN Simulation: Effect of observation number}}
    \subfigure[Effect of transmit SNR]{\includegraphics[width=0.48\columnwidth]{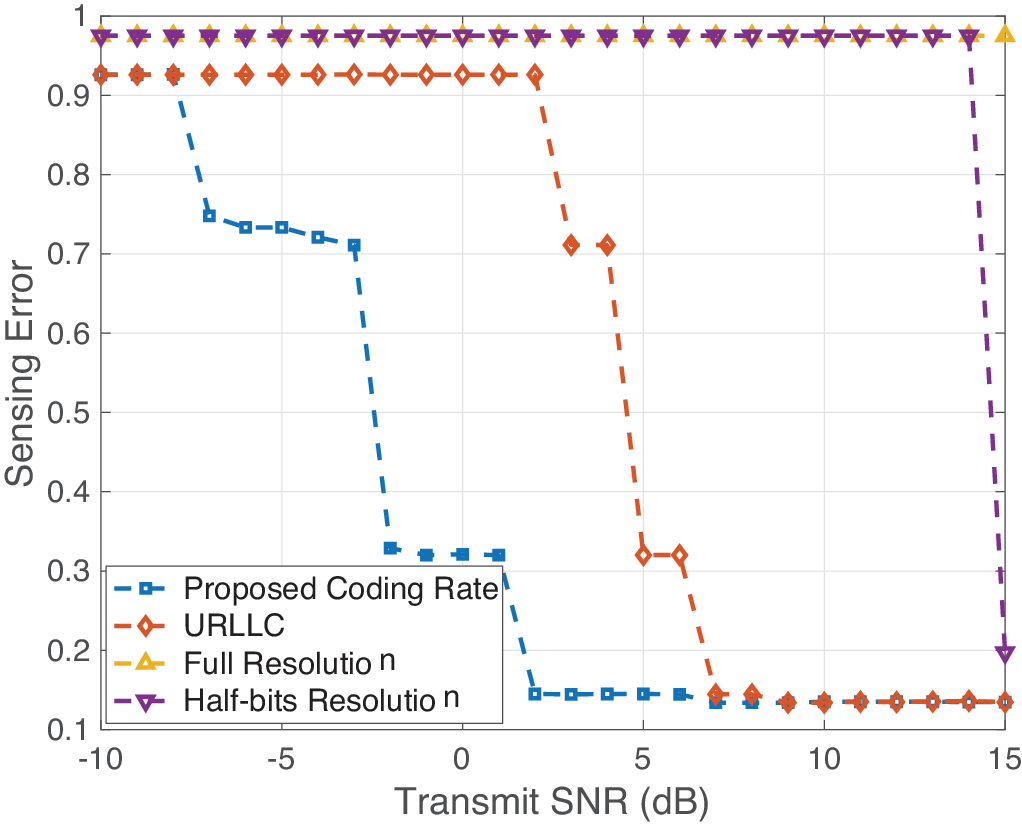}
    \label{CNN Simulation: Effect of transmit SNR}}
    \subfigure[Effect of receive antennas]{\includegraphics[width=0.48\columnwidth]{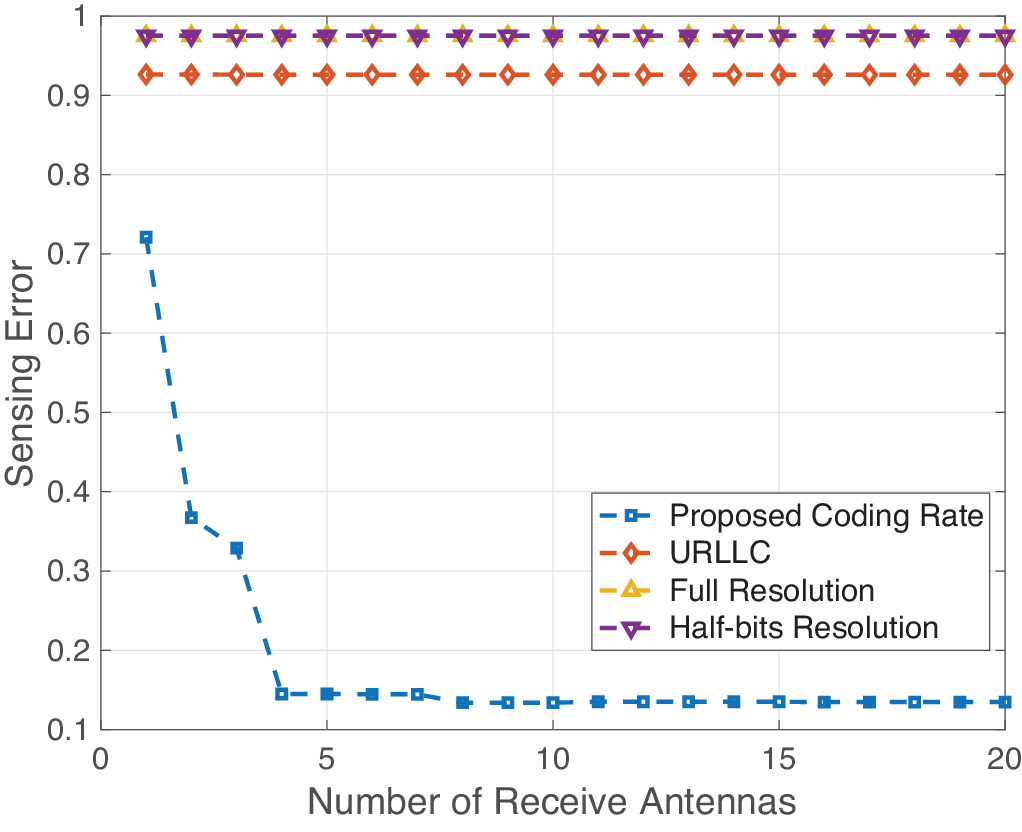}
    \label{CNN Simulation: Effect of receive antennas}}
    \subfigure[Effect of blocklength]{\includegraphics[width=0.48\columnwidth]{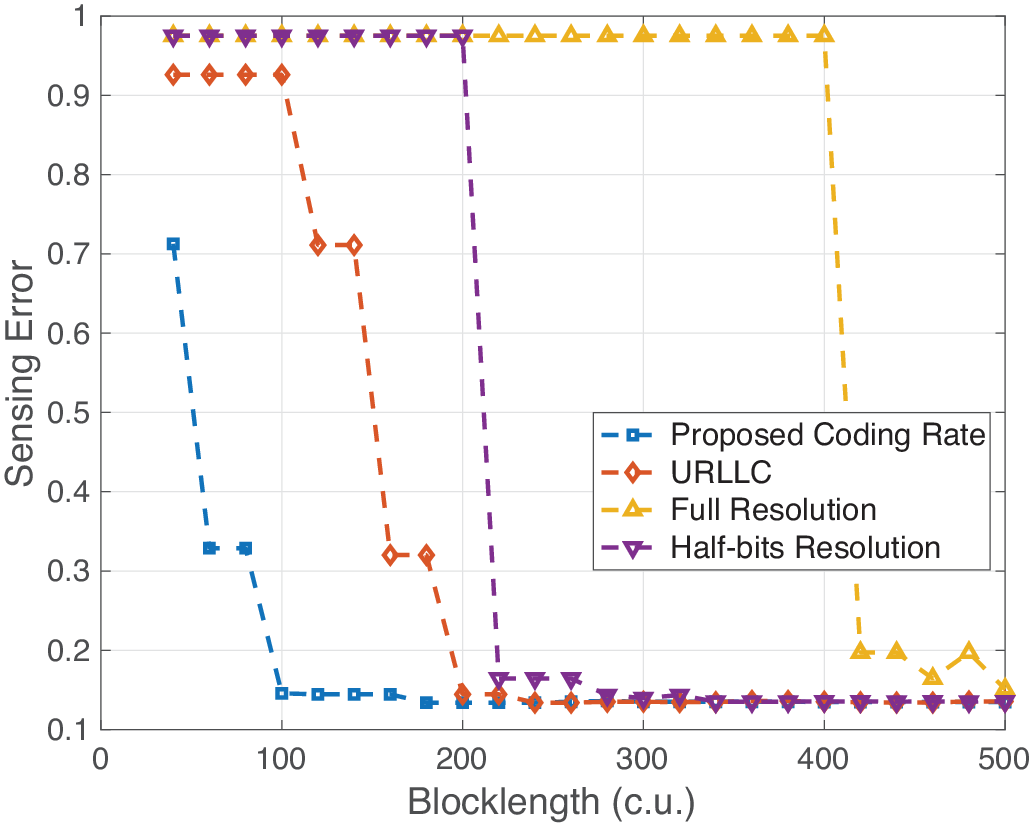}
    \label{CNN Simulation: Effect of blocklength}}
    \caption{The effects of different parameters on sensing error and comparison with benchmarks in the case of CNN classification.\vspace{-5mm}}\label{Simulation: CNN classification}
\end{figure}

\vspace{-6mm}
\subsection{CNN Classification Case}\label{Sec: CNN Classification Case}
We now apply the proposed framework to perform multi-view CNN-based classification on a real dataset. The coding rate is first optimized by leveraging insights derived from statistical inference, and the results are then compared against benchmarks to evaluate performance.
\subsubsection{Coding Rate Optimization on Real Dataset}
To optimize the coding rate for the multi-view CNN system, we address the problem formulated in \eqref{Eqn: objective}. Unlike the statistical inference case, there is no theoretical solution for this scenario due to the unknown discriminant gain of the CNN classifier. To overcome this challenge, we propose a heuristic coding rate optimization approach that incorporates insights from statistical inference.
Specifically, experimental results provide the relationship between inference accuracy, $a$, and the number of bits per feature, $R$. 
Drawing from the statistical inference case, where the tradeoff between source distortion and channel reliability is expressed as a sum of logarithmic terms, we adopt a similar approach here. 
We fit the experimentally obtained relationship between $a$ and $R$ using a logarithmic model, giving that $\ln a(R)=-10R^{-3}-0.2$. 
Based on this fit, we define the surrogate function for the CNN case as
$\phi_{\sf cnn}(R_c) = \ln a(R_c) + \ln(1 - \ep(R_c))$,
where $\ep(R_c)$ represents the average packet loss probability. 
Using this surrogate function, the coding rate can be efficiently optimized via the proposed algorithm.

\subsubsection{Comparison with Benchmarks}
Fig.~\ref{Simulation: CNN classification} presents the performance of the proposed ultra-low-latency EI-Sense system compared to benchmark methods in terms of sensing error. 
The results are evaluated with respect to the number of observations in Fig.~\ref{CNN Simulation: Effect of observation number}, transmit SNR in Fig.~\ref{CNN Simulation: Effect of transmit SNR}, the number of receive antennas in Fig.~\ref{CNN Simulation: Effect of receive antennas}, and blocklength in Fig.~\ref{CNN Simulation: Effect of blocklength}. 
As discussed in the context of statistical inference, similar trends are observed in the multi-view CNN classifier. Specifically, the sensing error probability consistently decreases with an increase in the number of observations, higher transmit SNR, more receive antennas, and longer blocklength.
Our proposed coding rate adaptation outperforms the benchmarks across all settings by balancing the tradeoff between source distortion and channel reliability. A similar trend is observed where other benchmark approaches start to match the performance of our proposed scheme when the transmit SNR is high and the blocklength is long. This is because high transmit SNR and long blocklength enable the transmission of high-resolution feature vectors with high reliability. In the context of the multi-view CNN classifier, our scheme is particularly useful in scenarios characterized by low SNR and short packet lengths, where the benefits of coding rate adaptation for balancing source distortion and channel reliability are most pronounced.

\vspace{-1mm}
\section{Concluding Remarks}\label{Sec: Conclusion}
In this paper, we explored the framework of ultra-low-latency EI-Sense system supported by 6G networks, focusing on the critical trade-off between source distortion and channel reliability. 
To address this, we developed a coding rate optimization scheme to minimize sensing errors by effectively balancing the impacts of source distortion and channel reliability. 
Experimental validation on both synthetic and real datasets demonstrated significant performance improvements compared to traditional URLLC techniques.
Future research will explore advanced quantization and coding strategies to further enhance EI-Sense system performance. 

\vspace{-1mm}    
\appendix
\vspace{-1mm}    
\subsection{Proof of Theorem~\ref{Thm: discriminant gain reduction}}\label{Proof: discriminant gain reduction}
\noindent
\textbf{Lemma A.1.} ``If $\bA$ and $\bB$ are two invertible matrices, then $\bA^{-1}-(\bA+\bB)^{-1}=\bA^{-1}\bB(\bA+\bB)^{-1}$.''

\noindent
\textbf{Lemma A.2.} ``If $\bA$ and $\bB$ are two symmetric positive semi-definite matrices, then $\tr(\bA\bB)\leq\tr(\bA)\tr(\bB)$.''

From the definitions of discriminant gain, we have
\begin{align}
    D_0&=\frac{1}{2}(\bmu_2-\bmu_1)^{\sT}\bSigma^{-1}(\bmu_2-\bmu_1)\nn\\
    &=\frac{1}{2}\tr\{\bSigma^{-1}(\bmu_2-\bmu_1)(\bmu_2-\bmu_1)^{\sT}\}\\
    D(\sigma_q^2)&=\frac{1}{2}(\bmu_2-\bmu_1)^{\sT}(\bSigma+\sigma_q^2\bI)^{-1}(\bmu_2-\bmu_1)\nn\\
    &=\frac{1}{2}\tr\{(\bSigma+\sigma_q^2\bI)^{-1}(\bmu_2-\bmu_1)(\bmu_2-\bmu_1)^{\sT}\}
\end{align}
The differences between them can be upper bounded by
\begin{align}
    &D_0-D(\sigma_q^2)\nn\\
    &=\tr\{\sigma_q^2\bSigma^{-1}(\bSigma+\sigma_q^2\bI)^{-1}(\bmu_2-\bmu_1)(\bmu_2-\bmu_1)^{\sT}\}\nn\\
    &=\tr\{\sigma_q^2(\bSigma+\sigma_q^2\bI)^{-1}(\bmu_2-\bmu_1)(\bmu_2-\bmu_1)^{\sT}\bSigma^{-1}\}\label{Eqn: Appdx A1}\\
    &\leq\tr\{\sigma_q^2(\bSigma+\sigma_q^2\bI)^{-1}\}\tr\{(\bmu_2-\bmu_1)(\bmu_2-\bmu_1)^{\sT}\bSigma^{-1}\}\nn\\
    &=\sigma_q^2\tr\{(\bSigma+\sigma_q^2\bI)^{-1}\}D_0.
\end{align}
Following similar procedures until \eqref{Eqn: Appdx A1}, the lower bound can be derived as 
\begin{align}
    &D_0-D(\sigma_q^2)\nn\\
    &=\frac{\sigma_q^2\tr\{(\bSigma+\sigma_q^2\bI)^{-1}(\bmu_2-\bmu_1)(\bmu_2-\bmu_1)^{\sT}\bSigma^{-1}\}\tr\{\bSigma+\sigma_q^2\bI\}}{\tr\{\bSigma+\sigma_q^2\bI\}}\nn\\
    &\geq\sigma_q^2\tr\{\bSigma+\sigma_q^2\}^{-1}D_0.
\end{align}
This completes the proof.

\subsection{Proof of Theorem~\ref{Thm: performance of sequential sensing}}\label{Proof: performance of sequential sensing}
Consider $M$ out of total $K$ transmissions are successful, where $M$ is a random variable with $M\leq K$. Given the set of successfully received observations $\{\bx\}_M\triangleq\{\bx_1,\cdots,\bx_M\}$, the score function, $\Xi(\{\bx\}_M)$, is the sum of individual discrimination score functions, $\xi(\bx_k),k=1,\cdots,M$, as follows: $\Xi(\{\bx\}_M)=\sum_{k=1}^M\xi(\bx_k)$. If the observations are sampled from the class-1, then the sensing error probability can be upper bounded by
\begin{align}
    P_{e,1}&=\Pr(\Xi(\{\bx\}_M)>0|\ell=1)\\
    &=\Pr(\exp(t\Xi(\{\bx\}_M))>1|\ell=1),~\forall t>0\\
    &\leq\min_{t>0}\mE[\exp(t\Xi(\{\bx\}_M))|\ell=1]\label{Eqn: Appdx B1}\\
    &=\min_{t>0}\mE\l[\mE\l[\exp\l(t\sum_{k=1}^{M}\xi(\bx_k)\r)\Bigg|M\r]\r]\\
    &=\min_{t>0}\mE\l[\prod_{k=1}^M\mE\l[\exp(t\xi(\bx))\Bigg|\ell=1\r]\r]\\
    &=\min_{t>0}\mE\l[\prod_{k=1}^M\cM_{\xi}(t)\r]=\min_{t>0}\mE\l[(\cM_{\xi}(t))^M\r]\label{Eqn: Appdx B7}\\
    &\equiv \min_{t>0}\cG_{M}(\cM_{\xi}(t))
\end{align}
where \eqref{Eqn: Appdx B1} comes from the Markov inequality, $\cM_{\xi}(\cdot)$ is the \emph{moment-generating function} (MGF) of the random variable $\xi(\bx)$ condition on $(\ell=1)$, and $\cG_M(\cdot)$ is the \emph{probability-generating function} (PGF) of the random variable $M$.

Since the observations $\{\bx\}_M$ are sampled from class-1, the discriminant functions follow the Gaussian distribution: $\xi(\bx_k)\sim\cN(-D,2D), k=1,\cdots,M$, whose MGF is 
\begin{align}
    \cM_{\xi}(t)=\exp\l(-Dt+\frac{1}{2}\cdot2Dt^2\r)=\exp(D(t^2-t)).\label{Eqn: Appdx B3}
\end{align}
Due to channel fading, in the total $K$ rounds, the probability of packet loss varies as $\varepsilon_{p,k}, k=1,\cdots,K$. The number of successful transmissions follows the Poisson Binomial distribution: $M\sim{\sf PoiBin}(K;1-\varepsilon_{p,1},\cdots,1-\varepsilon_{p,K})$, whose PGF can be written as
\begin{align}
    \cG_M(z)\!=\!\prod_{k=1}^K(\varepsilon_{p,k}\!+\!(1\!-\!\varepsilon_{p,k})z)\!=\!\prod_{k=1}^K(z\!+\!(1\!-\!z)\varepsilon_{p,k})).
\end{align}
Then, we perform some transformations as follows:
\begin{align}
    \cG_M(z)&=\exp\l[\ln\prod_{k=1}^K(z+(1-z)\varepsilon_{p,k})\r]\\
    &=\exp\l[\sum_{k=1}^K\ln\l(z\l(1+\frac{1-z}{z}\varepsilon_{p,k}\r)\r)\r].
\end{align}

Based on the above results, we can derive that
\begin{align}
    \cG_M(z)
    &=\exp\l[K\ln z+\sum_{k=1}^K\ln\l(1+\frac{1-z}{z}\varepsilon_{p,k}\r)\r]\\
    &=\l[z\exp\l(\frac{1}{K}\sum_{k=1}^K\ln\l(1+\frac{1-z}{z}\varepsilon_{p,k}\r)\r)\r]^K\\
    &\leq \l[z\cdot\frac{1}{K}\sum_{k=1}^K\l(1+\frac{1-z}{z}\varepsilon_{p,k}\r)\r]^K\label{Eqn: Appdx B2}\\
    &=\l(z+(1-z)\overline{\varepsilon_{p}}\r)^K,\label{Eqn: Appdx B4}
\end{align}
where \eqref{Eqn: Appdx B2} comes from Jensen's inequality, and $\overline{\varepsilon_{p}}$ denotes the averaged packet loss probability given by $\overline{\varepsilon_p}=\frac{1}{K}\sum_{k=1}^K\varepsilon_{p,k}$.

The sensing error probability can be obtained by substituting the MGF in \eqref{Eqn: Appdx B3} into the PGF in \eqref{Eqn: Appdx B4}, giving that
\begin{align}
    P_{e,1}&=\min_{t>0}\l(\cM_{\xi}(t)+(1-\cM_{\xi}(t))\overline{\varepsilon_{p}}\r)^K\label{Eqn: Appdx B6}\\
    &=\l(\exp(-D/4)+(1-\exp(-D/4))\overline{\varepsilon_{p}}\r)^K,\label{Eqn: Appdx B5}
\end{align}
where the minimum value of \eqref{Eqn: Appdx B6} is achieved at $t=\frac{1}{2}$. According to the symmetry, the Bayes error probability for observations that sampled from class-2 has the same expression as \eqref{Eqn: Appdx B5}. 
This completes the proof.

\subsection{Proof of Proposition~\ref{Proposition: average packet loss}}\label{Proof: average packet loss}
To facilitate our derivation, we define a function as follows:
\begin{equation}
    g(\gamma)=\frac{1}{2V(\gamma)}\l(C(\gamma)-R_c\r)^2,
\end{equation}
whose minimum is achieved at $\gamma=2^{R_c}-1$.
Then, the integral in \eqref{Eqn: expected packet loss probability} can be calculated as
\begin{align}
    \overline{\varepsilon_p}&=\int_{0}^{\infty}Q\l(\sqrt{2Ng(\gamma)}\r)p_{\gamma}(\gamma)d\gamma\\
    &=\int_{0}^{\infty}\frac{1}{\sqrt{2\pi}}\exp(-Ng(\gamma))\frac{Ng'(\gamma)}{\sqrt{2Ng(\gamma)}}F_{\gamma}(\gamma)\d\gamma,\label{Eqn: Appdx C1}
\end{align}
where $F_{\gamma}(\cdot)$ denotes the cumulative density function (CDF) of the Erlang distribution $\gamma\sim{\sf Erlang}(L,1/\gamma_0)$, and $g'(\cdot)$ denotes the first derivative of function $g(\cdot)$.
The equation in \eqref{Eqn: Appdx C1} comes from the two facts that $F_{\gamma}(0)=0$ and $\lim_{\gamma\to\infty}Q(\sqrt{2Ng(\gamma)})=0$.
Leveraging the Laplace's method, we can approximate the integral in \eqref{Eqn: Appdx C1} as
\begin{align}
    \overline{\varepsilon_p}&=\frac{1}{2}\sqrt{\frac{N}{\pi}}\int_{0}^{\infty}\frac{g'(\gamma)\l(\Gamma(L)-\Gamma(L,\gamma/\gamma_0)\r)}{\Gamma(L)\sqrt{g(\gamma)}}\exp(-Ng(\gamma))\d\gamma\nn\\
    &\approx\sqrt{\frac{g'(\gamma)^2}{2g''(\gamma)g(\gamma)}}\l(1-\frac{\Gamma(L,\gamma/\gamma_0)}{\Gamma(L)}\r)\exp(-Ng(\gamma))\Bigg|_{\gamma=2^{R_c}-1}\nn\\
    &=1-\frac{1}{\Gamma(L)}\Gamma\l(L,\frac{1}{\gamma_0}\l(2^{R_c}-1\r)\r),\label{Eqn: Appdx C2}
\end{align}
where $\Gamma(\cdot)$ and $\Gamma(\cdot,\cdot)$ denote the Gamma function and upper incomplete Gamma function, respectively. 
The equation in \eqref{Eqn: Appdx C2} comes from the fact that $\frac{g'(\gamma)^2}{2g''(\gamma)g(\gamma)}\Big|_{\gamma=2^{R_c}-1}=1$, which can be more easily verified from the following result:
\begin{equation}
    g'(\gamma)^2-2g''(\gamma)g(\gamma)\big|_{\gamma=2^{R-c}-1}=0.
\end{equation}
This completes the proof.

\subsection{Proof of Lemma~\ref{Lemma: convexity of DG}}\label{Proof: convexity of DG}
Since $\bSigma\in\mR^{d\times d}$ is a symmetric positive definite matrix, we can perform the eigenvalue decomposition as follows:
\begin{equation}
    \bSigma=\bQ^{\sT}\diag(\lambda_1,\cdots,\lambda_{d})\bQ,
\end{equation}
where $\bQ\in\mR^{d\times d}$ is an orthogonal matrix and $\lambda_i>0,~\forall i$.
To ease the notation, we define $\bnu\triangleq\bQ(\bmu_1-\bmu_2)$. 
Then, the effective discriminant gain can be denoted as
\begin{align}
    D(R_c)&=\frac{1}{2}\bnu^{\sT}\diag\l\{\frac{1}{\lambda_1+\sigma_q^2(R_c)},\cdots,\frac{1}{\lambda_d+\sigma_q^2(R_c)}\r\}\bnu\nn\\
    &=\frac{1}{2}\sum_{i=1}^d\frac{\nu_i^2}{\lambda_i+\sigma_q^2(R_c)},
\end{align}
where $\nu_1,\cdots,\nu_d$ are the coefficients in vector $\bnu$.
It is obvious that $z=2^{NR_c/d}-1$ is convex and $\sigma_q^2=\frac{U^2}{3z^2}$ is monotone decreasing for $z>0$, so that $\sigma_q^2(R_c)$ is concave. Therefore, $\frac{\nu_i^2}{\lambda_i+\sigma_q^2(R_c)}$ is concave for $i=1,\cdots,d$.
The sum of them is still concave, i.e., $D(R_c)$ is concave.
This completes the proof.

\subsection{Proof of Lemma~\ref{Lemma: convexity of epsilon}}\label{Proof: convexity of epsilon}
Based on the average packet loss probability in \eqref{Eqn: average packet loss probability}, the average packet success probability can be expressed as 
\begin{align}
    1-\ep(R_c)&=\frac{1}{\Gamma(L)}\Gamma\l(L,\frac{1}{\gamma_0}(2^{R_c}-1)\r)\nn\\
    &=\frac{1}{\Gamma(L)}\int_{\frac{1}{\gamma_0}(2^{R_c}-1)}^{\infty}t^{L-1}e^{-t}\d t.
\end{align}
Transforming the variable as $x=t-\frac{1}{\gamma_0}(2^{R_c}-1)$, we have
\begin{align}
    1-\ep(R_c)=&~\frac{1}{\Gamma(L)}\int_{0}^{\infty}\l(x+\frac{1}{\gamma_0}(2^{R_c}-1)\r)^{L-1}\nn\\
    &\times\exp\l(-\l(x+\frac{1}{\gamma_0}(2^{R_c}-1)\r)\r)\d x.
\end{align}
The derivative of average packet loss probability is 
\begin{align}
    &\ep'(R_c)=-\frac{1}{\Gamma(L)}\frac{\d}{\d R_c}\int_{\frac{1}{\gamma_0}(2^{R_c}-1)}^{\infty}t^{L-1}e^{-t}\d t\nn\\
    &=-\frac{2^{R_c}(2^{R_c}-1)^{L-1}\ln2}{\gamma_0^L\Gamma(L)}\exp\l(-\frac{1}{\gamma_0}(2^{R_c}-1)\r).
\end{align}
To facilitate our derivation, we introduce a function as follows:
\begin{align}
    \zeta(R_c)&\triangleq\frac{1-\ep(R_c)}{\ep'(R_c)}\nn\\
    &=\underbrace{\frac{\gamma_0}{2^{R_c}\ln2}}_{(a)}\int_{0}^{\infty}\underbrace{\l(1+\frac{\gamma_0x}{2^{R_c}-1}\r)^{L-1}}_{(b)}e^{-x}\d x.
\end{align}
Since $\gamma_0>0$ and $L\geq1$, both $(a)$ and $(b)$ are decreasing function with respect to $R_c$. 
Hence, $\zeta(R_c)$ is a decreasing function.
According to the relation that
\begin{equation}
    \frac{\d}{\d R_c}\ln(1-\ep(R_c))=-\frac{1}{\zeta(R_c)},
\end{equation}
we can conclude that the first derivative of $\ln(1-\ep(R_c))$ is a decreasing function with respect to $R_c$. 
This means the function $\ln(1-\ep(R_c))$ is a concave function, and thus $(1-\ep(R_c))$ is log-concave.
This completes the proof.

\bibliography{Reference}
\bibliographystyle{IEEEtran}

\end{document}